
\documentclass[10pt,journal,compsoc]{IEEEtran}
%
% If IEEEtran.cls has not been installed into the LaTeX system files,
% manually specify the path to it like:
% \documentclass[10pt,journal,compsoc]{../sty/IEEEtran}

% Some very useful LaTeX packages include:
% (uncomment the ones you want to load)

% *** MISC UTILITY PACKAGES ***
%
%\usepackage{ifpdf}
% Heiko Oberdiek's ifpdf.sty is very useful if you need conditional
% compilation based on whether the output is pdf or dvi.
% usage:
% \ifpdf
%   % pdf code
% \else
%   % dvi code
% \fi
% The latest version of ifpdf.sty can be obtained from:
% http://www.ctan.org/pkg/ifpdf
% Also, note that IEEEtran.cls V1.7 and later provides a builtin
% \ifCLASSINFOpdf conditional that works the same way.
% When switching from latex to pdflatex and vice-versa, the compiler may
% have to be run twice to clear warning/error messages.

% *** CITATION PACKAGES ***
%
\ifCLASSOPTIONcompsoc
  % IEEE Computer Society needs nocompress option
  % requires cite.sty v4.0 or later (November 2003)
  \usepackage[nocompress]{cite}
\else
  % normal IEEE
  \usepackage{cite}
\fi
% cite.sty was written by Donald Arseneau
% V1.6 and later of IEEEtran pre-defines the format of the cite.sty package
% \cite{} output to follow that of the IEEE. Loading the cite package will
% result in citation numbers being automatically sorted and properly
% "compressed/ranged". e.g., [1], [9], [2], [7], [5], [6] without using
% cite.sty will become [1], [2], [5]--[7], [9] using cite.sty. cite.sty's
% \cite will automatically add leading space, if needed. Use cite.sty's
% noadjust option (cite.sty V3.8 and later) if you want to turn this off
% such as if a citation ever needs to be enclosed in parenthesis.
% cite.sty is already installed on most LaTeX systems. Be sure and use
% version 5.0 (2009-03-20) and later if using hyperref.sty.
% The latest version can be obtained at:
% http://www.ctan.org/pkg/cite
% The documentation is contained in the cite.sty file itself.
%
% Note that some packages require special options to format as the Computer
% Society requires. In particular, Computer Society  papers do not use
% compressed citation ranges as is done in typical IEEE papers
% (e.g., [1]-[4]). Instead, they list every citation separately in order
% (e.g., [1], [2], [3], [4]). To get the latter we need to load the cite
% package with the nocompress option which is supported by cite.sty v4.0
% and later. Note also the use of a CLASSOPTION conditional provided by
% IEEEtran.cls V1.7 and later.

% *** GRAPHICS RELATED PACKAGES ***
%
\ifCLASSINFOpdf
  % \usepackage[pdftex]{graphicx}
  % declare the path(s) where your graphic files are
  % \graphicspath{{../pdf/}{../jpeg/}}
  % and their extensions so you won't have to specify these with
  % every instance of \includegraphics
  % \DeclareGraphicsExtensions{.pdf,.jpeg,.png}
\else
  % or other class option (dvipsone, dvipdf, if not using dvips). graphicx
  % will default to the driver specified in the system graphics.cfg if no
  % driver is specified.
  % \usepackage[dvips]{graphicx}
  % declare the path(s) where your graphic files are
  % \graphicspath{{../eps/}}
  % and their extensions so you won't have to specify these with
  % every instance of \includegraphics
  % \DeclareGraphicsExtensions{.eps}
\fi
\hyphenation{op-tical net-works semi-conduc-tor}

\usepackage{amsmath}
%\allowdisplaybreaks
\usepackage{color}
\usepackage{graphicx}
\usepackage{threeparttable}
\usepackage[linesnumbered, ruled, vlined]{algorithm2e}
\usepackage{amsmath}
\usepackage{amssymb}
\usepackage{amsthm}
\usepackage{mathtools}
\usepackage{soul}
\usepackage[normalem]{ulem}
\usepackage{times}
\usepackage{graphicx, caption, subcaption}
\usepackage{diagbox}

\usepackage{thmtools}
\usepackage{thm-restate}
\usepackage{cleveref}

\newtheorem{definition}{\textbf{Definition}}[section]
%[section]
\newtheorem{proposition}{\textbf{Proposition}}[section]
%[section]
\newtheorem{assumption}{Assumption}%[section]
\newtheorem{principle}{\textbf{Principle}}[section]

\usepackage[noend]{algpseudocode}
\usepackage{indentfirst}

\DeclareMathOperator*{\argmin}{argmin}

\makeatletter
\newcommand{\rmnum}[1]{\romannumeral #1}
\newcommand{\Rmnum}[1]{\expandafter\@slowromancap\romannumeral #1@}
\makeatother

\usepackage{array}
\newcolumntype{C}[1]{>{\centering\arraybackslash}m{#1}}

\definecolor{azure(colorwheel)}{rgb}{0.0, 0.5, 1.0}

\definecolor{frenchblue}{rgb}{0.0, 0.45, 0.73}

\definecolor{bittersweet}{rgb}{1.0, 0.44, 0.37}

\definecolor{green(pigment)}{rgb}{0.0, 0.65, 0.31}

\definecolor{navyblue}{rgb}{0.0, 0.0, 0.5}

\definecolor{darkcerulean}{rgb}{0.03, 0.27, 0.49}

\definecolor{darkpowderblue}{rgb}{0.0, 0.2, 0.6}

\definecolor{egyptianblue}{rgb}{0.06, 0.2, 0.65}

\begin{document}
%
% paper title
% Titles are generally capitalized except for words such as a, an, and, as,
% at, but, by, for, in, nor, of, on, or, the, to and up, which are usually
% not capitalized unless they are the first or last word of the title.
% Linebreaks \\ can be used within to get better formatting as desired.
% Do not put math or special symbols in the title.
\title{Towards Cost-Optimal Policies for DAGs to Utilize IaaS Clouds with Online Learning}
%\title{Designing Cost-Effective Policies to Utilize Self-owned, Spot and On-demand Instances for Malleable Jobs via Online Learning}
%\title{Cost-Effective Use of Self-owned, Spot and On-demand Instances for Malleable Jobs via Online Learning}
%\title{Towards Designing Cost-Optimal Policies to Utilize IaaS Clouds with Online Learning}
%
%
% author names and IEEE memberships
% note positions of commas and nonbreaking spaces ( ~ ) LaTeX will not break
% a structure at a ~ so this keeps an author's name from being broken across
% two lines.
% use \thanks{} to gain access to the first footnote area
% a separate \thanks must be used for each paragraph as LaTeX2e's \thanks
% was not built to handle multiple paragraphs
%
%
%\IEEEcompsocitemizethanks is a special \thanks that produces the bulleted
% lists the Computer Society journals use for "first footnote" author
% affiliations. Use \IEEEcompsocthanksitem which works much like \item
% for each affiliation group. When not in compsoc mode,
% \IEEEcompsocitemizethanks becomes like \thanks and
% \IEEEcompsocthanksitem becomes a line break with idention. This
% facilitates dual compilation, although admittedly the differences in the
% desired content of \author between the different types of papers makes a
% one-size-fits-all approach a daunting prospect. For instance, compsoc
% journal papers have the author affiliations above the "Manuscript
% received ..."  text while in non-compsoc journals this is reversed. Sigh.

\author{Xiaohu~Wu,
        Han Yu,
        Giuliano Casale, and Guanyu~Gao% <-this % stops a space
\IEEEcompsocitemizethanks{\IEEEcompsocthanksitem Xiaohu Wu and Han Yu are with the School of Computer Science and Engineering, Nanyang Technological University, Singapore. %\protect\\
% note need leading \protect in front of \\ to get a newline within \thanks as
% \\ is fragile and will error, could use \hfil\break instead.
E-mail: \{xiaohu.wu, han.yu\}@ntu.edu.sg
\IEEEcompsocthanksitem Giuliano Casale is with the Department of Computing, Imperial College London, United Kingdom.
% note need leading \protect in front of \\ to get a newline within \thanks as
% \\ is fragile and will error, could use \hfil\break instead.
E-mail: g.casale@imperial.ac.uk

\IEEEcompsocthanksitem Guanyu Gao is with the School of Computer Science and Engineering, Nanjing University of Science and Technology, China. %\protect\\
% note need leading \protect in front of \\ to get a newline within \thanks as
% \\ is fragile and will error, could use \hfil\break instead.
E-mail: gygao@njust.edu.cn

%\IEEEcompsocthanksitem Francesco De Pellegrini is with University of Avignon, France. %\protect\\
% note need leading \protect in front of \\ to get a newline within \thanks as
% \\ is fragile and will error, could use \hfil\break instead.
%E-mail: francesco.de-pellegrini@univ-avignon.fr

}% <-this % stops an unwanted space
\thanks{Manuscript received April 19, 2005; revised August 26, 2015.}}

% note the % following the last \IEEEmembership and also \thanks -
% these prevent an unwanted space from occurring between the last author name
% and the end of the author line. i.e., if you had this:
%
% \author{....lastname \thanks{...} \thanks{...} }
%                     ^------------^------------^----Do not want these spaces!
%
% a space would be appended to the last name and could cause every name on that
% line to be shifted left slightly. This is one of those "LaTeX things". For
% instance, "\textbf{A} \textbf{B}" will typeset as "A B" not "AB". To get
% "AB" then you have to do: "\textbf{A}\textbf{B}"
% \thanks is no different in this regard, so shield the last } of each \thanks
% that ends a line with a % and do not let a space in before the next \thanks.
% Spaces after \IEEEmembership other than the last one are OK (and needed) as
% you are supposed to have spaces between the names. For what it is worth,
% this is a minor point as most people would not even notice if the said evil
% space somehow managed to creep in.

% The paper headers
\markboth{Journal of \LaTeX\ Class Files,~Vol.~14, No.~8, August~2015}%
{Shell \MakeLowercase{\textit{{\em et al.}}}: Bare Demo of IEEEtran.cls for Computer Society Journals}
% The only time the second header will appear is for the odd numbered pages
% after the title page when using the twoside option.
%
% *** Note that you probably will NOT want to include the author's ***
% *** name in the headers of peer review papers.                   ***
% You can use \ifCLASSOPTIONpeerreview for conditional compilation here if
% you desire.

% The publisher's ID mark at the bottom of the page is less important with
% Computer Society journal papers as those publications place the marks
% outside of the main text columns and, therefore, unlike regular IEEE
% journals, the available text space is not reduced by their presence.
% If you want to put a publisher's ID mark on the page you can do it like
% this:
%\IEEEpubid{0000--0000/00\$00.00~\copyright~2015 IEEE}
% or like this to get the Computer Society new two part style.
%\IEEEpubid{\makebox[\columnwidth]{\hfill 0000--0000/00/\$00.00~\copyright~2015 IEEE}%
%\hspace{\columnsep}\makebox[\columnwidth]{Published by the IEEE Computer Society\hfill}}
% Remember, if you use this you must call \IEEEpubidadjcol in the second
% column for its text to clear the IEEEpubid mark (Computer Society jorunal
% papers don't need this extra clearance.)

% use for special paper notices
%\IEEEspecialpapernotice{(Invited Paper)}

% for Computer Society papers, we must declare the abstract and index terms
% PRIOR to the title within the \IEEEtitleabstractindextext IEEEtran
% command as these need to go into the title area created by \maketitle.
% As a general rule, do not put math, special symbols or citations
% in the abstract or keywords.
\IEEEtitleabstractindextext{%
\begin{abstract}
%Many businesses that possess insufficient computing infrastructure need to buy extra computing resources from clouds.
Premier cloud service providers (CSPs) offer two types of purchase options, namely on-demand and spot instances, with time-varying features in availability and price. Users like startups have to operate on a limited budget and similarly others hope to reduce their costs. While interacting with a CSP, central to their concerns is the process of cost-effectively utilizing different purchase options possibly in addition to self-owned instances. A job in data intensive applications is typically represented by a directed acyclic graph which can further be transformed into a chain of tasks. The key to achieving cost efficiency is determining the allocation of a specific deadline to each task, as well as the allocation of different types of instances to the task. In this paper, we propose a framework that determines the optimal allocation of deadlines to tasks. The framework also features an optimal policy to determine the allocation of spot and on-demand instances in a predefined time window, and a near-optimal policy for allocating self-owned instances. The policies are designed to be parametric to support the usage of online learning to infer the optimal values against the dynamics of cloud markets.
%The policies are parametric and we leverage an online learning technique to infer their optimal values.
Finally, several intuitive heuristics are used as baselines to validate the cost improvement brought by the proposed solutions. We show that the cost improvement over the state-of-the-art is up to 24.87\% when spot and on-demand instances are considered and up to 59.05\% when self-owned instances are considered.
\end{abstract}

% Note that keywords are not normally used for peerreview papers.
\begin{IEEEkeywords}
On-demand instances, spot instances, cost efficiency, online learning.
\end{IEEEkeywords}
}

% make the title area
\maketitle

% To allow for easy dual compilation without having to reenter the
% abstract/keywords data, the \IEEEtitleabstractindextext text will
% not be used in maketitle, but will appear (i.e., to be "transported")
% here as \IEEEdisplaynontitleabstractindextext when the compsoc
% or transmag modes are not selected <OR> if conference mode is selected
% - because all conference papers position the abstract like regular
% papers do.
\IEEEdisplaynontitleabstractindextext
% \IEEEdisplaynontitleabstractindextext has no effect when using
% compsoc or transmag under a non-conference mode.

% For peer review papers, you can put extra information on the cover
% page as needed:
% \ifCLASSOPTIONpeerreview
% \begin{center} \bfseries EDICS Category: 3-BBND \end{center}
% \fi
%
% For peerreview papers, this IEEEtran command inserts a page break and
% creates the second title. It will be ignored for other modes.
\IEEEpeerreviewmaketitle

\section{Introduction}

%The worldwide Infrastructure-as-a-Service (IaaS) cloud computing market is witnessing increasing adoptions, growing at a rate of 37.3\% in 2019 to total market capitalization of \$44.5 billion \cite{Gartner}. IaaS enables customers to save the costs associated with the purchase and maintenance of computing servers, while enjoying enough capacity to satisfy peak demand and improve user experience. They can scale up or down their computing capacity by renting servers from IaaS to match the temporal variation in demand. The dominant IaaS providers include Amazon Elastic Cloud Compute (EC2), Microsoft Azure, and Google Cloud. They account for 45.0\%, 17.9\%, and 5.3\% of the global market share, respectively. The ongoing COVID-19 pandemic also provides a push for adoption of cloud computing as more enterprises move their applications to public clouds. The cost-effective usage of IaaS cloud computing by a collective of users is key to the long-term sustainability of this field \cite{Giuliano19a}.

The worldwide Infrastructure as a Service (IaaS) cloud market is attracting various users and grew 37.3\% in 2019 to total \$44.5 billion \cite{Gartner}. IaaS enables users to escape purchase and maintenance of servers whose capacity has to satisfy their peak demand to avoid unacceptable latency. Users can scale up or down their computing capacity by renting servers from IaaS providers to match the variation in demand over time. The dominant IaaS providers include Amazon Elastic Cloud Compute (EC2), Microsoft Azure, and Google Cloud, accounting for 45.0\%, 17.9\% and 5.3\% of the global market share respectively. The ongoing COVID-19 pandemic also provides a further push for the adoption of IaaS as more enterprises move their applications to public clouds. To bridge the gap between IaaS providers and users, the key is to determine the process for users to cost-effectively use IaaS services, which enhances user engagement and satisfaction and long-term sustainability of cloud ecosystems \cite{Giuliano19a}.

%To bridge the gap between IaaS providers and users, the key is to determine the process of cost-effective using IaaS services is key to user engagement and satisfaction and long-term substainability of cloud ecosystems \cite{Giuliano19a}.

%The process of cost-effectively using IaaS services The cost-effective usage of IaaS opens the door for users to satisfactorily access IaaS services.. We can predict a continued push of cloud spending as an outcome of the coronavirus pandemic, with more enterprises moving their applications to public clouds. The formation of cost-effectively using IaaS services is important in that it opens the door for users to satisfactorily access IaaS services, thus enhancing their engagement and the prosperity of cloud ecosystems \cite{Giuliano19a}.

On-demand and spot instances are two typical purchase options \cite{Wu19a}. On-demand instances are always available at a fixed price once requested by users. Users pay only when instances are actually consumed. Differently from Amazon EC2, spot instances are called spot virtual machines (VMs) in Microsoft Azure \cite{Microsoft-Azure} and preemptive VM instances in Google Cloud \cite{Google-Cloud}. Spot instances have uncertain availability. Generally, CSPs may reclaim the resources of spot instances at any time point for other purposes. In Google Cloud, spot prices are fixed and the instance availability only depends on the dynamics of system resources. In Amazon EC2 and Microsoft Azure, spot prices vary over time and a user needs to bid a price for spot instances; the instance availability also depends on the relation of the spot and bid prices. Spot instances can reduce costs by up to 50-90\% compared to on-demand instances \cite{aws}. On the other hand, a user can have its own instances, called self-owned instances, which, although insufficient at times, can be extended with additional IaaS instances purchased on-demand from the cloud. Also, some users may have no self-owned instances ({\em e.g.}, in the case of startups) and need to buy all necessary computing resources. %with a limited budget.

%\footnote{The paper \cite{Jain14} also appears in \cite{Jain14P} as a U.S. Patent.}

Previous works \cite{Jain14,Wu20a,Jain14P,Wu17} have enabled cost-effectively processing a special type of workloads, namely map-only tasks \cite{Fox11a,Jain12,Wu15a}; each task is partitioned into a large number of independent sub-tasks that can be executed on multiple instances simultaneously; there is also a parallelism bound specifying the maximum number of instances that the task can utilize simultaneously. However, such tasks are independent and can only cover a limited number of important applications. The workload is more generally described by a directed acyclic graph (DAG) whose nodes are tasks and whose edges represent precedence constraints among tasks \cite{Nagarajan13a,Fox11a}; each DAG is referred to as a job. Examples of such jobs include the workloads of MapReduce and Spark's RDDs \cite{Dean08a,Zaharia12a,Verma16a}, which are fundamental programming paradigms for big-data processing. For a user, its jobs arrive over time, each with a specific timing requirement, {\em i.e.}, a deadline by which to complete all its tasks. Each job will be allocated instances of different types (self-owned, on-demand and spot). Our problem is to find an allocation that minimizes cost while meeting the deadline requirement of the job and the precedence constraints among its tasks.

%The questions to be addressed include (\rmnum{1}) the order of executing the tasks of a job, (\rmnum{2}) the proportions of different instances used to execute a task, and (\rmnum{3}) the deadline by which to finish a task, which can leave enough time to execute its latter tasks.

\vspace{0.08em}\noindent\textbf{Challenges.} The costs of self-owned, spot and on-demand instances are increasing. To be cost-optimal, the objective of an allocation policy should be maximizing the utilization of self-owned and then spot instances and minimizing the utilization of costly on-demand instances. One component of our framework is the policy for allocating different types of instances to a single task to be executed in a predefined time window and it involves determining the proportions of different instances. Previous works \cite{Jain14,Wu20a} consider a discrete allocation case where the allocation of spot and on-demand instances is updated on an hourly basis, which arises in a class of instances in Amazon EC2 where the billing of on-demand instances is done on an hourly basis. In this paper, we consider the continuous allocation case with a reformulated analysis; here, users pay for the exact period in which on-demand instances are consumed. The resulting framework applies to the other class of instances in Amazon EC2 and the instances of Microsoft Azure and Google Cloud.

The other new aspect is addressing the precedence constraints among the tasks of a job. A task can be executed only when all its preceding tasks have been finished. For analytical tractability, a DAG job is normally transformed into a job with a chain precedence constraint ({\em i.e.,} a sequence of tasks) where one task can be executed only if its preceding task is completed \cite{Nagarajan13a}. Spot instances are available at irregular intervals. The minimum execution time needed to finish a single task is its workload divided by its parallelism bound. Suppose a user has no self-owned instances. {An intuitive greedy strategy does not work well}: it first requests to {fully} utilize spot instances to finish tasks one by one until some time point after which all remaining tasks have to {fully} utilize costly on-demand instances to meet the job deadline. In contrast, difference exists among tasks and the capacity of a task utilizing spot instances depends on its characteristics and the length of an associated time window in which it is executed. Given a job, its tasks can be executed from its arrival time until its deadline, and a proper allocation of time window sizes to its tasks is needed to maximize the total utilization of spot instances. %The time window limits the duration of a task utilizing spot instances.

%The questions to be addressed include (\rmnum{1}) the order of executing the tasks of a job, (\rmnum{2}) the proportions of different instances used to execute a task, and (\rmnum{3}) the deadline by which to finish a task, which can leave enough time to execute its latter tasks. We will characterize the capacity of a task utilizing spot instances as a function of related factors and formulate the related problem as an integer linear program. Observing some nice properties of the function, we can derive the expected optimal allocation of time windows. In the case with self-owned instances, the optimal allocation can be derived in a bit more complicated but similar way.

%Such strategy cannot achieve the capacity that a job has to utilize spot instances. To address this, we first focus on each of its tasks and characterize the relation of the task's capacity with the allocated time window size. We can thus derive the expected optimal allocation of time windows in which tasks are executed.

\vspace{0.11em}\noindent\textbf{Our Contributions.} {\em Technically}, the main contribution of this paper is to propose a framework that enables utilizing a class of IaaS services to process jobs with chain precedence constraints cost-effectively, where on-demand instances are charged for the period in which they are exactly consumed:
%To be cost-effective, we propose a parametric algorithm and policies to maximize the utilization of self-owned and spot instances. The parameters relate to the cloud market dynamic and the uncertainty of job's characteristics. We further use the online learning technique in \cite{Jain14,Jain14P} to infer the optimal parameters. Thus, we obtain an integrated framework to cost-effectively utilize cloud services. More specifically,
\begin{itemize}
\item In the case that a single task is to be executed in a time window, we derive policies that allocate spot and on-demand instances cost-optimally and self-owned instances cost-effectively. This is the basis to derive the capacity that a task can achieve to utilize spot instances, given the time window length.

\item A job has multiple tasks. We derive an optimal yet efficient allocation of time window sizes to the tasks, based on a formulation of the problem as an integer linear program to maximize the utilization of spot instances. The allocation algorithm can be used both when the tenant has self-owned resources and when it does not.

%\item The above results form an integrated framework to cost-effectively utilize cloud services for the general DAG-structured jobs.

\end{itemize}
Leveraging existing techniques in combinatorial optimization, a DAG job can be transformed into a job with a chain precedence constraint \cite{Nagarajan13a}. {\em Consequently}, our technical framework can be used to cost-effectively utilize cloud services for the general DAG jobs. It applies to a significant class of instances in Amazon EC2 and the instances of Microsoft Azure and Google Cloud. Experimentally, several intuitive heuristics are used as baselines to validate the cost improvement brought by the proposed solutions. The cost saving is up to 24.87\% when spot and on-demand instances are considered and up to 59.05\% when self-owned instances are considered. In our framework, the policies and algorithm are parametric, in terms of the availability of spot instances and the sufficiency of self-owned instances. Like \cite{Wu20a}, we leverage the online learning technique of \cite{Jain14,Jain14P} to infer these parameters. {We note that the sufficiency is indicated by a parameter $\beta_{0}$ that controls the allocation of self-owned instances (see Section~\ref{sec.allocation-self-owned}); the more self-owned instances a user has, the smaller the value of $\beta_{0}$ and the more self-owned instances each task gets allocated. While a user requests spot instances, their availability is quantified as the averaged proportion of the period in which spot instances are available.}

The rest of this paper is organized as follows. The related work is introduced in Section~\ref{sec.related-work}. We formally describe the problem in Section~\ref{sec.model}. In Section~\ref{sec.scheduling}, we propose a technical framework for allocating deadlines and self-owned, on-demand and spot instances. In Section~\ref{sec.generalized-online-learning}, we introduce the existing techniques for job transformation and online learning, which will be integrated into our framework. Experimental results are given in Section~\ref{sec.evaluation} to validate the effectiveness of the solutions of this paper. Finally, we conclude this paper in Section~\ref{sec.conclusion}. %The proofs of all propositions are omitted and can be found in the appendix.

\section{Related Work}
\label{sec.related-work}

To date, multiple service and pricing models have been proposed \cite{Wu20b,Zhang14a} and the spot and on-demand service model is a major service offering \cite{Ludwig19a,Wu19a,Guerin20a}. Jain {\em et al.} are the first to enable the application of an online learning approach to infer the cost-effective parametric policy for utilizing spot and on-demand instances \cite{Jain14,Jain14P}. The key to achieving cost efficiency is the design of a parametric policy and another limitation of \cite{Jain14,Jain14P} is that self-owned instances are not considered. Next, with this approach, Wu {\em et al.} formalize the instance allocation process and derive the expected optimal parametric policy for spot and on-demand instances and the near-optimal parametric policy for self-owned instances \cite{Wu17,Wu20a}. The works \cite{Jain14,Jain14P,Wu17,Wu20a} simply consider the allocation to independent map-only tasks. Also, in their framework, on-demand instances are charged on an hourly basis, and users have to consider maximizing the usage of instances to integer hours to avoid extra charge. In our framework, users pay by the second and thus for what they exactly consume. {Thus, our} policies for a single task have different forms than the policies of \cite{Wu17,Wu20a}. We will also propose an approach to deal with the precedence constraints among tasks. The online learning approach is interesting in that it does not need prior statistical workload characterization, compared to other techniques such as stochastic programming.

Also, there are other works that simply consider independent tasks and associate a specific deadline with each task to make the instance allocation process manageable. Specifically, Zafer {\em et al.} use a Markov model to characterize spot prices and derive an optimal bidding strategy to utilize spot instances \cite{Song12b}. Yao {\em et al.} formulate the problem of utilizing reserved and on-demand instances as an integer program, and propose heuristic algorithms that give approximate solutions \cite{Yao14a}.

Now, we briefly review other approaches to cost-effective use of cloud services. There is one class of works based on priori statistical knowledge of the workload or spot prices. For instance, Hong {\em et al.} and Chaisiri {\em et al.} apply stochastic programming for reserved and on-demand instances \cite{Hong,Chaisiri}; Zheng {\em et al.} derive the optimal bidding strategy for spot instances \cite{Zheng15}. However, the computational cost of deriving the related statistical knowledge is high \cite{Shib}.
%while implementing these techniques \cite{Shib}.
Wang {\em et al.} apply the Bahncard problem for reserved and on-demand instances and the resulting algorithm is analyzed by competitive analysis \cite{Wang}. Vintila {\em et al.} propose a genetic algorithm for spot and on-demand instances \cite{Vintila13a}. Shi {\em et al.} apply Lyapunov optimization and are among the first to jointly utilize the three common types of cloud instances \cite{Shib}; yet, a large job delay is incurred \cite{Wu20a}. Gao {\em et al.} consider the joint resource provisioning and task scheduling and propose a two-timescale markov decision process approach to maximize the profit of a multimedia service provider \cite{Gao16a}. Dubois {\em et al.} propose a heuristic to help cloud users decide the right type of spot instances and the bid price, aiming to minimize the cost while maintaining an acceptable level of performance \cite{Dubois15a,Dubois16a}.

\section{Problem Description and Model}
\label{sec.model}

In this section, we introduce the cloud pricing models, define the operational space of a user to utilize various instances, and characterize the objective of this paper.

%\subsection{Pricing Models in the Cloud}
%\label{sec.pricing-model}

\subsection{Resource Availability and Pricing Structure}
\label{sec.pricing-model}

%In this subsection, we use a generic language to describe the pricing models of Amazon EC2, Microsoft Azure, and Google Cloud, in terms of the service availability and their costs.

%There are three types of instances.
 %An example of self-owned instances is academic private clouds, which are provided to researchers free of charge.

On-demand and spot instance services available at popular CSPs may be modelled as follows. {\em First}, the price $p$ of an {\em on-demand instance} is fixed per unit of time and such instances are always available once requested by a user. For example, the price of utilizing an instance for one hour is posted to users that pay for computing capacity by the second. When a user utilizes an instance for $x$ hours, it is charged $p\cdot x$ where $x$ can be fractional. This is more convenient to users compared with other pricing where billing is done on an hourly basis; in the latter, users have to consider maximizing the usage of instances to integer hours to avoid extra charge.

{\em Second}, a user can also request {\em spot instances} at a lower price than on-demand instances. Their availability varies over time and users can only utilize spot instances occasionally. Factors affecting availability include the idleness of cloud systems generally and the bid price in some scenarios. The cloud can reclaim spot instances allocated to a user at any time whenever it needs to access to those resources for other high-priority jobs; the bid price is the maximum price that a user is willing to pay for spot instances. In Google cloud, spot instances are offered at a fixed price; they are delivered to a user when there are idle instances. In Amazon EC2 and Microsoft Azure, the price of spot instances varies over time; a user successfully gets the spot instances only if its bid price exceeds the spot price; spot instances are reclaimed by the cloud when either there are inadequate resources or the bid price is below the spot price. From a user perspective, spot service is a type of stochastic service. When a user persistently requests spot instances, the spot service commences at random time points and lasts for random durations. To facilitate analysis, we let $\beta$ denote the average portion for which a user can spot instances per unit of time where $\beta\in (0, 1)$.

{\em Third}, a user might have its own instances, {\em i.e.}, {\em self-owned instances}, whose amount is limited or zero and denoted by $r$. If any, the (averaged) cost of utilizing self-owned instances is assumed to be the cheapest compared with cloud instances, which implies that a user always prefers to first utilize its own instances before purchasing instances from the cloud. Thus, without loss of generality, this cost is assumed to be zero.

%Finally, Table~\ref{table-three-dimensions} summarizes the three dimensions of each type of instances, which are key features in cost management.

To sum up, availability and price are two key features in cost management. From a user perspective, on-demand instances are always available as if the CSP has infinite on-demand instances to deliver. A user can also request multiple spot instances, which however are available occasionally. The availability of on-demand and spot instances is illustrated in Figure~\ref{Fig-Instance-Availability}. If any, self-owned instances are finite and are always available; however, they may be insufficient at times to satisfy the user computing need. The costs of utilizing self-owned, spot, and on-demand instances are increasing.

\begin{figure}[t!]
    %\centering

        \centering
        \includegraphics[height=0.73in]{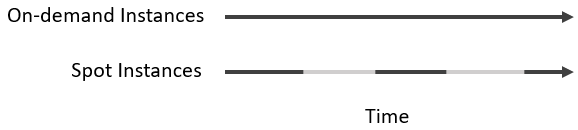}

    \caption{Availability of Cloud Instances: the black ({\em resp.} grey) segments imply the corresponding instances are available ({\em resp.} unavailable) as time goes by.}
    \label{Fig-Instance-Availability}
\end{figure}

%\begin{table}[t]
%	\centering
%		\caption{Three Dimensions of Instances}
%	\begin{threeparttable}
%		\begin{tabular}{| C{2.5cm} |  C{1.5cm} |  C{1.1cm} |  C{1.2cm} |}
%			\hline
%                               & Availability   &   Price    &  Capacity    \\ \hline
%
%        Self-owned Instances   &  certain       &   low      &   limited    \\ \hline

%        On-demand Instances    &  certain       &   medium   &   infinite    \\ \hline

%        Spot Instances         &  uncertain     &   high     &   infinite    \\ \hline

%       		\end{tabular}
%	\end{threeparttable}%
%		\label{table-three-dimensions}
%\end{table}

%{\color{red}Using queueing theory to describe the process of spot service commencements and service times}

\subsection{DAG-Structured Jobs}
\label{sec.jobs}

As the time horizon expands, the job arrival of a tenant is monitored at every moment. The tenant plans to rent instances from IaaS clouds to process its jobs and aims to minimize the cost of completing a set of jobs $\mathcal{J}$ (that arrive over a time horizon $T$) by their deadlines. Following \cite{Nagarajan13a,Jain14,Wu20a}, each job $j$ is characterized by a DAG. It has an arrival time $a_{j}$ and a deadline $d_{j}$, that is, job $j$ can be executed and has to be finished in a time window $[a_{j}, d_{j}]$. The main notation of this paper is summarized in Table~\ref{table}. The DAG nodes represent tasks and the directed edges represent precedence relations. {Each DAG job $j$ has $l$ tasks and different jobs may have different values of $l$.} We use $i_{1}\prec i_{2}$ to indicate the execution of task $i_{2}$ can begin only after task $i_{1}$ is completed. Thus, a task $i$ can be executed when all its preceding tasks are completed.

Each task of job $j$ consists of a large number of negligible sub-tasks that are independent and can be executed on multiple instances simultaneously. Completing a task means completing all its sub-tasks. Formally, each task $i$ of job $j$ has a workload $z_{i}$ and an upper bound $\delta_{i}$ of parallelism. While executing task $i$, the number of instances assigned to task $i$ could change over time; the parallelism bound $\delta_{i}$ limits the maximum number of instances that can be used to execute task $i$ simultaneously. $z_{i}$ is the instance time that task $i$ has to consume in order to be finished. For example, suppose $z_{i}=2$; to finish task $i$, it needs to consume one instance for two units of time or two instances for one unit of time. When the task $i$ is always executed on the maximum number $\delta_{i}$ of instances, it has {\em the minimum execution time}, which is denoted by
\begin{align}\label{equa-minimum-e-time}
e_{i}=\frac{z_{i}}{\delta_{i}}.
\end{align}

\begin{table}
	\centering
	\begin{threeparttable}[t]
		\caption{Main Notation}
		\begin{tabular}{| C{1.06cm} | C{6.6cm} |}   %p{1.1cm}|p{3.8cm}|}
			\hline
			\textbf{Symbol} & \textbf{Explanation} \\ \hline
			
			$\mathcal{J}$ & a set of jobs that arrive over time \\ \hline
			
			$j$ and $a_{j}$ & a job of $\mathcal{J}$ and its arrival time \\ \hline
			
			$d_{j}$ & the deadline: job $j$ must be completed by time $d_{j}$ \\ \hline

			  $l$   & the number of tasks in a job $j$ \\ \hline

			  $i$   & a task $i$ in job $j$ where $i\in \{1, 2, \cdots, l\}$ \\ \hline
			
			$z_{i}$ & the size/workload of task $i$, measured in instance time \\ \hline
			
			$\delta_{i}$ & the parallelism bound, {\em i.e.}, the maximum number of instances that can be simultaneously used by task $i$ \\ \hline

			$e_{i}$ & the minimum execution time of task $i$, {\em i.e.}, $e_{i}=z_{i}/\delta_{i}$ \\ \hline
			
			$\varsigma_{i}$ & the deadline by which task $i$ has to be finished \\ \hline
			
			$\tilde{\varsigma}_{i}$ & the earliest time at which the execution of task $i$ can begin \\ \hline
			
			$\hat{\varsigma}_{i}$ & the time window size of task $i$ where $\hat{\varsigma}_{i}=\varsigma_{i}-\tilde{\varsigma}_{i}$ \\ \hline
			
			$r$ & the number of self-owned instances\\ \hline

			$\tilde{z}_{i}$ & at time $\tilde{\varsigma}_{i}$, the workload of task $i$ to be processed by spot and on-demand instances \\ \hline

			$\tilde{z}_{i}(t)$ & at time $t\in [\tilde{\varsigma}_{i}, \varsigma_{i}]$, the workload of task $i$ to be processed by spot and on-demand instances \\ \hline

			$z_{i}^{o}$ & the workload of task $i$ processed by spot instances \\ \hline

			$\beta$ & the availability of spot instances specifying the average duration of utilizing spot instances in each unit of time   \\ \hline

            $\beta_{0}$ & the sufficiency index of self-owned instances, used to control the allocation of self-owned instances via Equation (\ref{policy-self-owned})  \\ \hline
			
%$\beta_{0}$ & a parameter that controls the allocation of self-owned instances via Equation (\ref{policy-self-owned})  \\ \hline		
	
            $N(t)$ & the number of self-owned instances currently idle at time $t$ \\ \hline

            $N(t_{1},t_{2})$ & the maximum number of self-owned instances that are always available in $[t_{1}, t_{2}]$, {\em i.e.}, $\min\nolimits_{t\in [t_{1}, t_{2}]}N(t)$ \\ \hline

			$s_{i}$ and $o_{i}$ & the numbers of spot and on-demand instances requested for task $i$  \\ \hline
			
			$r_{i}$ & the number of self-owned instances allocated to task $i$\\ \hline

			$\varsigma_{i}$ & the deadline associated with task $i$\\ \hline

		\end{tabular}
		\label{table}
	\end{threeparttable}
\end{table}

%For example, in CAP3, the data are divided into many files and each sub-task finishes reading a file; if each file has 458 reads, it may take about 7 seconds to process a sub-task when two high CPU extra large instances are used \cite{Gunarathne10}. For each task $i$, its size/workload is defined as the time when finishing it on one instance, denoted by $z_{i}$, and can be estimated by the input data size or the number of small sub-tasks. While executing a task, the remaining workload can also be estimated by the remaining input data to be processed. Such tasks can be formally modeled as malleable tasks in literature \cite{Jain12,Nagarajan13a}. Each task $i$ has a parallelism bound $\delta_{i}$ that limits the maximum number of instances that it could utilize simultaneously. While executing a job, the number of instances assigned to a job could change over time.

%Let $\tau_{i}$ denote the completion time of task $i$. The allocation and execution of task $i$ is defined by a function $\pi_{i}$: $[a_{i}, \tau_{i}]$ $\rightarrow$ $[0, 1, \cdots, \delta_{i}]$; here, $\pi(t)$ denotes the total number of instances allocated to $i$ at time $t$, regardless of the instance types, and $\int_{t=a_{i}}^{\tau_{i}}{\pi(t)}dt=z_{i}$. For all $i_{1}\prec i_{2}$, we have $c_{i_{1}}\leq a_{i_{2}}$. $\frac{z_{i}}{\delta_{i}}$ is the minimum execution time of a task $i$, denoted by $e_{i}$, and the minimum execution time of job $j$ is the maximum of $\sum_{h=1}^{l}{\frac{z_{i_{h}}}{\delta_{i_{h}}}}$ over all chains $i_{1}\prec i_{2}\prec \cdots\prec i_{l}$ in job $j$.

\subsection{Problem Description}

Each job $j$ must be finished in a time window $[a_{j}, d_{j}]$. We first need to determine a time window $[\tilde{\varsigma}_{i}, \varsigma_{i}]$ in which each task $i$ of job $j$ is executed, while respecting the precedence constraints among {the} tasks. $\tilde{\varsigma}_{i}$ is the earliest time at which all its preceding tasks $i^{\prime}$ are finished where $i^{\prime}\prec i$ and at which the execution of $i$ can begin. $\varsigma_{i}$ is the deadline by which task $i$ has to be finished.

\subsubsection{Principled Instance Allocation Process}
\label{sec.instance_allocation_process}

%\vspace{0.3em}\noindent\textbf{Rules of Instance Allocation.}
While executed in $[\tilde{\varsigma}_{i}, \varsigma_{i}]$, each task $i$ is assigned $r_{i}$ self-owned instances: $r_{i}\geq 0$ if a user possesses self-owned instances ({\em i.e.,} $r>0$), and $r_{i}=0$ otherwise. The amount of workload processed by self-owned instances is $r_{i}\cdot (\varsigma_{i}-\tilde{\varsigma}_{i})$. %Except the possible workload processed by self-owned instances,
The remaining workload is to be processed by spot and on-demand instances and its amount is $\tilde{z}_{i}=z_{i}-r_{i}\cdot (\varsigma_{i}-\tilde{\varsigma}_{i})$. If $r_{i}=0$, all the workload of task $i$ will be processed by spot and on-demand instances. From time $\tilde{\varsigma}_{i}$ on, task $i$ requests $o_{i}$ on-demand instances and $s_{i}$ spot instances from the cloud to process the remaining workload where
\begin{align}\label{Constraint-spot-on-demand}
s_{i}+o_{i}=\delta_{i}-r_{i}
\end{align}
to satisfy the parallelism constraint.
%Due to {\color{blue}inherent uncertainty within spot service}, {\color{blue}a task $i$ may reach a state where it has to} totally utilize $\delta_{i}-r_{i}$ stable on-demand instances in order to finish by its deadline.
While task $i$ is being executed at a time $t\in [\tilde{\varsigma}_{i}, \varsigma_{i}]$, the expected workloads that have been processed by on-demand and spot instances are $o_{i}\cdot (t-\tilde{\varsigma}_{i})$ and $\beta\cdot s_{i}\cdot (t-\tilde{\varsigma}_{i})$ respectively. At time $t$, the remaining workload of $i$ to be processed is denoted by $\tilde{z}_{i}(t)$ whose expected value is as follows:
\begin{align*}
\tilde{z}_{i}(t) = \tilde{z}_{i} - o_{i}\cdot (t-\tilde{\varsigma}_{i}) - \beta\cdot s_{i}\cdot (t-\tilde{\varsigma}_{i}).
\end{align*}
With the parallelism constraint, $\frac{\tilde{z}_{i}(t)}{\delta_{i}-r_{i}}$ is the minimum time needed to finish the remaining $\tilde{z}_{i}(t)$ workload.

\begin{definition}\label{def-flexibility}
For a task $i$ with residual instance time $\tilde{z}_{i}(t)>0$, we say that task $i$ has flexibility to utilize unstable spot instances at a moment $t\in [\tilde{\varsigma}_{i}, \varsigma_{i}]$ when the following condition holds:
\begin{align}\label{equa-flexibility}
\frac{\tilde{z}_{i}(t)}{\delta_{i}-r_{i}} < \varsigma_{i}-t.
\end{align}
\end{definition}

%Initially, $o_{i}$ on-demand instances and $s_{i}$ spot instances are requested for processing the remaining workload $\tilde{z}_{i}$ of task $i$.
Due to inherent uncertainty within spot service, a task $i$ may reach a state where it has to totally utilize $\delta_{i}-r_{i}$ stable on-demand instances in order to finish by its deadline. Formally, as task $i$ is executed, if there exists some time $t$ satisfying
\begin{align*}
\tilde{z}_{i}(t) = (\varsigma_{i}-t)\cdot (\delta_{i}-r_{i}),
\end{align*}
we call such time $t$ as {\em a turning point} and denote it by $\varsigma_{i}^{c}$. At time $\varsigma_{i}^{c}$, we have to give up utilizing cheap spot instances to finish the remaining $\tilde{z}_{i}(\varsigma_{i}^{c})$ workload by the deadline $\varsigma_{i}$. The instance allocation process may have two phases defined below:

\begin{definition}\label{def-allocation-process}
%The instance allocation process is described as follows.
If the turning point exists and $\varsigma_{i}^{c}\neq \tilde{\varsigma}_{i}$, the instance allocation process has two phases:
\begin{itemize}
\item [(\rmnum{1})] $o_{i}$ on-demand and $s_{i}$ spot instances are requested in the period $[\tilde{\varsigma}_{i}, \varsigma_{i}^{c}]$;
\item [(\rmnum{2})] $\delta_{i}-r_{i}$ on-demand instances are utilized in $[\varsigma_{i}^{c}, \varsigma_{i}]$.
%Afterwards, we have to give up unstable spot instances from time $\varsigma_{i}^{c}$ on in order to meet the deadline and turn to totally utilize $\delta_{i}-r_{i}$ stable on-demand instances in the period $[\varsigma_{i}^{c}, \varsigma_{i}]$. {\color{red}These on-demand instances are always available in $[\varsigma_{i}^{c}, \varsigma_{i}]$.}
\end{itemize}
If the turning point exists and $\varsigma_{i}^{c}=\tilde{\varsigma}_{i}$, $\delta_{i}-r_{i}$ on-demand instances are utilized in the period $[\tilde{\varsigma}_{i}, \varsigma_{i}]$. If the turning point does not exist, $o_{i}$ on-demand instances and $s_{i}$ spot instances are requested until some time $t\in [\tilde{\varsigma}_{i}, \varsigma_{i}]$ such that $\tilde{z}_{i}(t)=0$.
\end{definition}

\vspace{0.1em}\noindent\textbf{Example.} Now, we give a toy example to illustrate the instance allocation process in Definition~\ref{def-allocation-process}. Suppose task $i$ has a parallelism bound $\delta_{i}=3$ and is executed in $[\tilde{\varsigma}_{i}, \varsigma_{i}]=[0, 2]$; the user has $r=1$ self-owned instance. The availability of spot instances is $\beta=0.5$. The scheduler allocates $r_{i}=1$ self-owned instance to task $i$ in $[0,2]$. The remaining workload to be processed is $\tilde{z}_{i}(0)=z_{i}-1\times 2$. The scheduler begins to request one spot and one on-demand instance at time 0 where $o_{i}=s_{i}=1$:
\begin{itemize}
\item If $z_{i}=3.5$, we have $\tilde{z}_{i}(0) = 1.5$. At time 1, task $i$ gets enough execution time from spot and on-demand instances; we thus have $\tilde{z}_{i}(1)=0$ and the turning point does not exist. This is illustrated in Fig.~2(a).%(\ref{Fig-Instance-Allocation-Process-a}).
\item If $z_{i}=5.5$, we have $\tilde{z}_{i}(0) = 3.5$. At time 1, the remaining workload is $\tilde{z}_{i}(1)=3.5-1.5=2$; since $\tilde{z}_{i}(1)=(\varsigma_{i}-1)\cdot (\delta_{i}-r_{i})$, the turning point exists and $\varsigma_{i}^{c}=1$; task $i$ need turn to totally utilize $\delta_{i}-r_{i}=2$ on-demand instances to meet the deadline. This is illustrated in Fig.~2(b).%(\ref{Fig-Instance-Allocation-Process-b}).
\end{itemize}
%The example is further illustrated by Figure~\ref{Fig-Instance-Allocation-Process}.

\begin{figure}[t!]
    %\centering
    \begin{subfigure}[t]{0.45\columnwidth}
        \centering
        \includegraphics[height=0.80in]{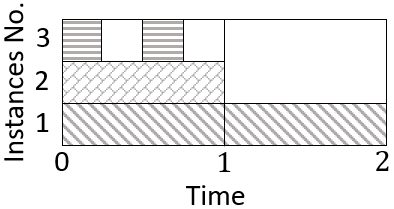}
        \caption{$z_{i}=3.5$}
        \label{Fig-Instance-Allocation-Process-a}
    \end{subfigure}%
    %~
    \hspace{0.2em}\begin{subfigure}[t]{0.45\columnwidth}
        \centering
        \includegraphics[height=0.80in]{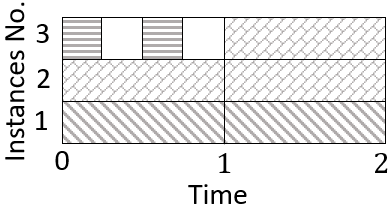}
        \caption{$z_{i}=5.5$}
        \label{Fig-Instance-Allocation-Process-b}
    \end{subfigure}
    \caption{Both plots illustrate the instance allocation of a task: the diagonal stripe, diagonal brick, and horizonal stripes areas denote the workloads processed by self-owned, on-demand and spot instances respectively.}
    \label{Fig-Instance-Allocation-Process}
\end{figure}
%In plot (a), task $i$ always has flexibility to utilize spot instances until its completion. In plot (b), at time 1, task $i$ has to turn to give up spot instances to be finished by time 2.

\subsubsection{Decision Variables, and Objectives}

Jobs arrive over time. Each job is represented as a DAG and has multiple tasks.

\vspace{0.2em}\noindent\textbf{Decision Variables.} Given a job $j$, we need to determine (\rmnum{1}) the deadline $\varsigma_{i}$ by which each task $i$ is finished and (\rmnum{2}) the numbers of spot and on-demand instances requested when there is flexibility for task $i$ to utilize spot instances ({\em i.e.,} when the turning point has not appeared; see the first and third cases of Definition~\ref{def-allocation-process}). If a user possesses self-owned instances, we also need to determine its amount allocated to each task $i$. Thus, our decision variables include $\varsigma_{i}$, $s_{i}$, $o_{i}$ and $r_{i}$ for each task $i$ of a job.

\vspace{0.2em}\noindent\textbf{Objective of Instance Allocation.} We refer to the ratio of the total cost of utilizing a certain type of instances to the total workload processed by this type of instances as the average unit cost of this type of instances. As described in Section~\ref{sec.pricing-model}, we assume like \cite{Wu17,Wu20a} that

\begin{assumption}\label{assump-order}
The average unit cost of self-owned instances is lower than the average unit cost of spot instances, which is lower than that of on-demand instances. % are increasing.
\end{assumption}

Due to Assumption~\ref{assump-order}, the overall objective of our instance allocation framework is to maximize the utilization of self-owned and then spot instances and minimize the utilization of costly on-demand instances. Achieving this objective involves properly determining the decision variables $\varsigma_{i}$, $s_{i}$, $o_{i}$ and $r_{i}$ for each task $i$ of a job.
%This involves the problem of allocating deadlines to the tasks of a job.
The deadline $\varsigma_{i}$ of a task $i$ affects its instance allocation process by Definition~\ref{def-allocation-process} and thus its completion time; the later affects the time that other tasks can start being executed due to the precedence constraint.

While allocating various instances to a single task $i$ in a specific time window  $[\tilde{\varsigma}_{i}, \varsigma_{i}]$, we should consider allocating various instances to a task in the order of self-owned, spot and on-demand instances; the objectives here are the same as the ones in \cite{Wu17,Wu20a} where only the allocation to a single task is considered. Differently, we consider the case of a DAG job where a user pays exactly for what it consumes. Now, we describe these objectives in Principles~\ref{prin-one} and~\ref{prin-two}.

%Due to Assumption~\ref{assump-order}, we should consider allocating various instances to each arriving job in the order of self-owned, spot and on-demand instances. In Principles~\ref{prin-one} and~\ref{prin-two}, we give the objectives that should be achieved when considering allocating each type of instances to each single task in a job.

\begin{principle}\label{prin-one}
If a user possesses self-owned instances, the scheduler should make self-owned instances (\rmnum{1}) fully utilized, and (\rmnum{2}) utilized in a way so as to maximize the opportunity that all tasks have to utilize spot instances.
\end{principle}

\begin{principle}\label{prin-two}
After self-owned instances are used or {if a user has no self-owned instances}, the scheduler should utilize on-demand instances in a way so as to maximize the opportunity that a task has to utilize spot instances.
\end{principle}

Realizing the above principles involves properly determining the decision variables $s_{i}$, $o_{i}$ and $r_{i}$ for each individual task $i$ of a job. Last but not least, a job $j$ has $l$ tasks and has to be finished in a given time window $[a_{j}, d_{j}]$. We also need to maximize the aggregate utilization of self-owned and spot instances by all tasks within the job. Correspondingly, we need to realize the following objective.

\begin{principle}\label{prin-three}
Before allocating instances to the $l$ tasks of a job, the scheduler needs to properly determine the deadlines $\varsigma_{1}, \varsigma_{2}, \cdots, \varsigma_{l}$ to maximize the overall utilization of self-owned instances, if any, and spot instances.
\end{principle}

In the following, we will propose solutions for realizing the three principles above. The final result is an integrated framework for a user to cost-effectively process DAG jobs by renting typical cloud instances from major IaaS providers.

\section{(Near-)Optimal Instance Allocation}
\label{sec.scheduling}

In this section, we consider a special case of jobs, {\em i.e.}, each job is a chain of $l$ tasks, where for all $i\in [2, l]$ the execution of the $i$-th task can begin {\em if and only if} the first $l-1$ tasks have been finished. We propose a framework to design (near-)optimal parametric policies that can effectively realize Principles~\ref{prin-one}-\ref{prin-three}. In the next section, we will use the technique of \cite{Nagarajan13a} to extend the framework to the case where the precedence constraints are present in a general DAG.

\subsection{Spot and On-demand Instances}
\label{sec.case-spot}

In this subsection, we consider the case that a user has no self-owned instances. We will derive a couple of optimal parametric policies in terms of the availability $\beta$ of spot instances to maximize the utilization of spot instances and realize Principle~\ref{prin-two} and~\ref{prin-three} optimally.

%and aim to maximize the utilization of spot instances, realizing Principle~\ref{prin-two} and~\ref{prin-three} optimally. We will derive a couple of optimal parametric policies, in terms of the availability $\beta$ of spot instances, to optimally determine (\rmnum{1}) the allocation of deadlines to tasks and (\rmnum{2}) the allocation of spot and on-demand instances to a single task in a predefined time window.

%In this subsection, we consider the case that a user has no self-owned instances and aim to maximize the utilization of spot instances, realizing Principle~\ref{prin-two} and~\ref{prin-three} optimally. We will derive a couple of optimal parametric policies, in terms of the availability $\beta$ of spot instances, to optimally determine (\rmnum{1}) the allocation of deadlines to tasks and (\rmnum{2}) the allocation of spot and on-demand instances to a single task in a predefined time window.

\subsubsection{Preliminaries}
\label{sec.example-1}

Consider a job $j$ with a chain of $l$ tasks to be processed in a time window $[a_{j}, d_{j}]$. While processing these tasks, {\em one question} is what deadline $\varsigma_{i}$ is associated to each task $i$ to ensure that the latter tasks have a large enough window $[\varsigma_{i}, d_{j}]$ in which they are finished. {For all $i\in [1, l]$, it is expected that $\varsigma_{i}$ is also the time point at which task $i$ is finished.} To respect the precedence constraints among the tasks, the execution of the $i$-th task can begin when the ($i-1$)-th task is finished {where $\tilde{\varsigma}_{i}=\varsigma_{i-1}$ for all $i\in [2, l]$}. Thus, task $i$ is expected to be executed in $[\varsigma_{i-1}, \varsigma_{i}]$ where $\varsigma_{0}=a_{j}$ trivially and we have
\begin{align}
a_{j}=\varsigma_{0} < \varsigma_{1} < \cdots  < \varsigma_{l} \leq d_{j}. \label{equa-deadline-order}
\end{align}
{\em The other question} is that, given the time window $[\varsigma_{i-1}, \varsigma_{i}]$ of task $i$, what is the optimal composition of instance types ({\em i.e.}, the values of $o_{i}$ and $s_{i}$) to maximize the amount of workload to be processed by spot instances.

For example, let us consider a job $j$ of $l=4$ tasks with $[a_{j}, d_{j}]$ $=$ $[0, 4]$. The task sizes are $z_{1}=1.5$, $z_{2}=0.5$, $z_{3}=2.5$, and $z_{4}$ $=$ $0.5$. The parallelism bounds are $\delta_{1}=2$, $\delta_{2}=1$, $\delta_{3}$ $=$ $3$, and $\delta_{4}$ $=$ $1$. The availability of spot instances is specified as $\beta$ $=0.5$.
%To make the instance allocation process manageable, a deadline is associated with each task to determine when a task has flexibility to utilize spot instances.
We artificially set the deadline $\varsigma_{i}$ of the $i$-th task to $i$ where $i\in [1, 4]$.
%The first task requests one spot instance and one on-demand instance in the time window $[0, 1]$; the second requests one spot instance in $[1, 2]$; the third requests one spot instance and two on-demand instances in $[2, 3]$; the fourth requests one spot instance in $[3, 4]$. During the entire execution of each task, it always has flexibility to utilize spot instances.
Each task $i$ is finished at time point $i$. In this setting, the amount of workload processed by spot instances is 2, which is illustrated in Fig.~\ref{Fig-Instance-Allocation-Process}. However, as seen later, the optimal amount of workload processed by spot instances is $\frac{22}{6}$ by properly setting the values of $\varsigma_{i}$, $o_{i}$, and $s_{i}$ for $i\in [1, 4]$. In the rest of this subsection, for an arbitrary job $j$, we will derive a computationally efficient yet optimal allocation of deadlines $\varsigma_{1}$, $\varsigma_{2}$, $\cdots$, $\varsigma_{l}$ to its tasks. Additionally, we also derive the expected optimal composition of instance types to finish each task $i$, which is in fact one enabler of the optimal deadline allocation. In this subsection, we have
\begin{align}\label{equa-parallelism-no-self-owned}
o_{i}+s_{i} = \delta_{i}
\end{align}
and the number of self-owned instances assigned to each task $i$ is zero, {\em i.e.,} $r_{i}=0$.

\begin{figure}[t!]
    %\centering
        \centering
        \includegraphics[height=0.78in]{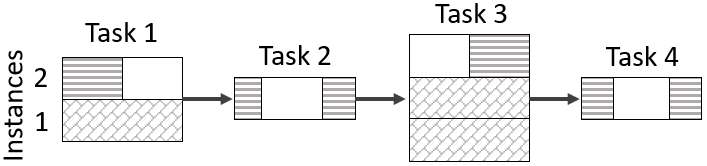}
    \caption{Processing a Chain of 4 Tasks: the diagonal brick, and horizonal stripes areas denote the workloads processed by on-demand and spot instances respectively; in the blank areas, no workload is processed.}
    \label{Fig-Instance-Allocation-Process}
\end{figure}

\subsubsection{Allocation to a Single Task}
\label{sec.spot-policy}

Suppose that the deadlines $\varsigma_{1}$, $\varsigma_{2}$, $\cdots$, $\varsigma_{l}$ are given in advance. In this subsection, we give the expected optimal composition of instance types for a single task $i$ to utilize spot and on-demand instances in the predefined time window $[\varsigma_{i-1}, \varsigma_{i}]$ where $i\in [1, l]$. This will realize Principle~\ref{prin-two} optimally.

%When the time window $[\varsigma_{i-1}, \varsigma_{i}]$ is large enough, it can be expected that task $i$ can be completed in the window by simply requesting cheap spot instances whose availability is uncertain over time. When the window is small but has a size at least $e_{i}$, task $i$ has to utilize some stable on-demand instances. Generally, task $i$ will request $o_{i}$ on-demand instances and $s_{i}$ spot instances in the first phase of instance allocation, where $o_{i}+s_{i}=\delta_{i}$, and request $\delta_{i}$ on-demand instances in the second phase.

The instance allocation process is described in Definition~\ref{def-allocation-process}. Now, we give a condition under which task $i$ can be expected to be finished by utilizing spot instances alone, without utilizing costly on-demand instances. We also derive the expected optimal strategy for task $i$ to utilize different types of instances.
%when the time window size $\hat{\varsigma}_{i}=\varsigma_{i}-\varsigma_{i-1}$ has different values.

\begin{proposition}\label{proposi-spot}
A task $i$ can be finished by utilizing spot instances alone when the time window size $\hat{\varsigma}_{i}$ satisfies the following condition:
\begin{align}\label{condition-1}
\hat{\varsigma}_{i} = \varsigma_{i}-\varsigma_{i-1} \geq \frac{e_{i}}{\beta}.
\end{align}
The expected optimal strategy of utilizing spot and on-demand instances is as follows:
\begin{itemize}
\item if the condition (\ref{condition-1}) holds, then it is expected that the turning point does not exist and we have $s_{i}=\delta_{i}$ and $o_{i}=0$;
\item if $\hat{\varsigma}_{i} \in \left(e_{i}, \frac{e_{i}}{\beta}\right)$, then the instance allocation process is expected to have two phases and we have in the first phase that $s_{i}=\delta_{i}$ and $o_{i}=0$;
\item if $\hat{\varsigma}_{i} = e_{i}$, then it is expected that the turning point $\varsigma_{i}^{c}$ is $\varsigma_{i-1}$ and we have $o_{i}=\delta_{i}$ and $s_{i}=0$.
\end{itemize}
\end{proposition}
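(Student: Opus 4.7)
The plan is to analyze the expected dynamics of the residual workload $\tilde{z}_{i}(t)$ and the turning-point condition of Definition~\ref{def-allocation-process} case by case. Throughout I use $r_{i}=0$ and $o_{i}+s_{i}=\delta_{i}$, so the expected residual is $\tilde{z}_{i}(t) = z_{i} - (o_{i}+\beta s_{i})(t-\varsigma_{i-1})$. Solving the turning-point equation $\tilde{z}_{i}(t) = \delta_{i}(\varsigma_{i}-t)$ for $t$, and using the parallelism constraint to rewrite $\delta_{i}-o_{i}-\beta s_{i} = (1-\beta) s_{i}$, yields, whenever $s_{i}>0$, an expected phase-1 length
\[
\tau := \varsigma_{i}^{c}-\varsigma_{i-1} = \frac{\delta_{i}(\hat{\varsigma}_{i}-e_{i})}{(1-\beta)\, s_{i}},
\]
which is the central expression I would reuse in all three cases.

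For the first case, $\hat{\varsigma}_{i}\geq e_{i}/\beta$, I would plug $s_{i}=\delta_{i}$ into the residual to get $\tilde{z}_{i}(t)=z_{i}-\beta \delta_{i}(t-\varsigma_{i-1})$, which reaches $0$ at $t^{\ast}=\varsigma_{i-1}+e_{i}/\beta \leq \varsigma_{i}$. To rule out a turning point I would observe that $\delta_{i}(\varsigma_{i}-t)-\tilde{z}_{i}(t)$ is linear in $t$ with slope $\beta-1<0$; evaluating at $t=\varsigma_{i-1}$ (value $\delta_{i}(\hat{\varsigma}_{i}-e_{i})\geq 0$) and at $t=t^{\ast}$ (value $\delta_{i}(\hat{\varsigma}_{i}-e_{i}/\beta)\geq 0$) shows strict positivity on the relevant interval, so no $t$ satisfies the turning-point equality while the task is still running. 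Optimality is immediate, since spot is cheaper than on-demand and the task finishes entirely on spot.

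For the second case, $e_{i}<\hat{\varsigma}_{i}<e_{i}/\beta$, the same formula with $s_{i}=\delta_{i}$ yields $\tau=(\hat{\varsigma}_{i}-e_{i})/(1-\beta) \in (0,\hat{\varsigma}_{i})$, confirming the existence of the turning point and the two-phase structure. To prove that $s_{i}=\delta_{i}$ is expected-optimal, I would compute the total expected on-demand workload $W_{\mathrm{od}}(s_{i}) = (\delta_{i}-s_{i})\tau + \delta_{i}(\hat{\varsigma}_{i}-\tau)$, substitute $\tau$, and simplify to $\delta_{i}(e_{i}-\beta\hat{\varsigma}_{i})/(1-\beta)$, which is independent of $s_{i}>0$. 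Thus any $s_{i}\in\{1,\ldots,\delta_{i}\}$ attains the minimum on-demand cost, and $s_{i}=\delta_{i}$ is singled out by Principle~\ref{prin-two}: it maximizes the number of spot instances solicited and shortens phase~1, reducing exposure to realizations where spot availability falls below its mean $\beta$.

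For the third case, $\hat{\varsigma}_{i}=e_{i}$, evaluating the turning-point equation at $t=\varsigma_{i-1}$ gives $z_{i}=\delta_{i}e_{i}=\delta_{i}\hat{\varsigma}_{i}$, so $\varsigma_{i}^{c}=\varsigma_{i-1}$ holds trivially; by Definition~\ref{def-allocation-process} the task must request $\delta_{i}$ on-demand instances throughout $[\varsigma_{i-1},\varsigma_{i}]$, and no spot provisioning can satisfy the deadline in expectation. The subtle step is Case~2, where the expected on-demand workload is indifferent to $s_{i}>0$; anchoring the specific choice $s_{i}=\delta_{i}$ to Principle~\ref{prin-two} (and to a robustness argument against stochastic realizations of the spot service) is the main point to articulate carefully in the full proof.
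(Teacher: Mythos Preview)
Your proposal is correct and follows essentially the same approach as the paper: both arguments hinge on the workload-balance equation at the turning point and the observation that, in the two-phase case, the expected spot (equivalently on-demand) workload equals $\frac{\beta}{1-\beta}\delta_{i}(\hat{\varsigma}_{i}-e_{i})$ regardless of the choice of $s_{i}>0$, so $s_{i}=\delta_{i}$ is one optimal choice. Your explicit formula $\tau=\delta_{i}(\hat{\varsigma}_{i}-e_{i})/[(1-\beta)s_{i}]$ and the endpoint check ruling out a turning point in Case~1 are just a slightly more detailed packaging of the same computation the paper does; the robustness gloss you add for selecting $s_{i}=\delta_{i}$ in Case~2 goes a bit beyond the paper, which simply notes that this choice \emph{can} be optimal given the indifference.
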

\begin{proof}
See Appendix~\ref{sec.spot}.
\end{proof}

\subsubsection{Optimal Deadline Allocation}

In this subsection, we will realize Principle~\ref{prin-three} optimally. A job $j$ should be executed in the time window $[a_{j}, d_{j}]$.
%Its $i$-th task has a deadline $\varsigma_{i}$ and is expected to be executed in the window $[\varsigma_{i-1}, \varsigma_{i}]$. To respect the precedence constraints among tasks, we have
%\begin{align}
%a_{j} = \varsigma_{0}  < \varsigma_{1} < \cdots  < \varsigma_{l} \leq d_{j}. \label{equa-deadline-order}
%\end{align}
Our question is finding an optimal allocation of $\varsigma_{1}$, $\varsigma_{2}$, $\cdots$, $\varsigma_{l}$ to maximize the utilization of cheap spot instances and minimize the consumption of costly on-demand instances.

\vspace{0.4em}\noindent\textbf{Formulation as an Integer Linear Program.} We formulate the deadline allocation problem as an integer linear program. To ensure that each task $i$ can be finished in its time window $[\varsigma_{i-1}, \varsigma_{i}]$, we have
\begin{align}
\hat{\varsigma}_{i} = \varsigma_{i} - \varsigma_{i-1} \geq e_{i} \enskip\text{ for all } i\in [1, l] \label{equa-relative-deadline}
\end{align}
where $e_{i}$ is the minimum execution time of task $i$ by (\ref{equa-minimum-e-time}). $\hat{\varsigma}_{i}$ can be written:
\begin{align}\label{equa-window-structure}
\hat{\varsigma}_{i} = e_{i}+x_{i},
\end{align}
where $x_{i}\geq 0$.

With the strategies in Proposition \ref{proposi-spot}, the total amount of workload processed by spot instances has the following relation with the time window size $\hat{\varsigma}_{i}$.

\begin{proposition}\label{proposi-spot-workload}
Given the time window size $\hat{\varsigma}_{i}$, the expected amount of workload processed by spot instances is
\begin{align}\label{equa-spot-workload-alone}
z_{i}^{o} =
\begin{cases}
& \frac{\beta}{1-\beta}\cdot \delta_{i}\cdot x_{i}\enskip\enskip\enskip\, \text{ if } \hat{\varsigma}_{i}\in \left[e_{i}, \frac{e_{i}}{\beta} \right] \\
\vspace{0.24em}& \enskip\enskip\enskip\enskip z_{i} \enskip\enskip\enskip\enskip\enskip\enskip\enskip\enskip \text{ if } \hat{\varsigma}_{i}\in \left[\frac{e_{i}}{\beta}, \infty \right)
\end{cases}
\end{align}
\end{proposition}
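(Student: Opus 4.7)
The plan is to split into the two cases prescribed in the statement, invoke Proposition~\ref{proposi-spot} to identify which of the scenarios in Definition~\ref{def-allocation-process} applies, and then compute the expected spot-processed workload directly from the definitions in Section~\ref{sec.instance_allocation_process}.

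First I would handle the easy case $\hat{\varsigma}_i \in [e_i/\beta, \infty)$. Here Proposition~\ref{proposi-spot} tells us that the turning point is not expected to appear and that throughout the execution $s_i = \delta_i$ and $o_i = 0$. Hence the task is expected to be completed entirely by spot instances, which gives $z_i^o = z_i$. This matches the second branch of (\ref{equa-spot-workload-alone}).

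Next I would treat the substantive case $\hat{\varsigma}_i \in [e_i, e_i/\beta]$. Proposition~\ref{proposi-spot} states that the allocation proceeds in two phases: in $[\varsigma_{i-1}, \varsigma_i^c]$ the task runs $s_i = \delta_i$ spot instances with $o_i = 0$, and in $[\varsigma_i^c, \varsigma_i]$ it runs $\delta_i$ on-demand instances. The expected spot-processed workload is therefore $z_i^o = \beta \delta_i (\varsigma_i^c - \varsigma_{i-1})$, so the whole task reduces to pinning down the turning point $\varsigma_i^c$. I would pin it down by combining two identities: the expression for $\tilde{z}_i(t)$ given in Section~\ref{sec.instance_allocation_process} (evaluated at $t = \varsigma_i^c$ with $r_i = 0$, $o_i = 0$, $s_i = \delta_i$), which yields $\tilde{z}_i(\varsigma_i^c) = z_i - \beta \delta_i (\varsigma_i^c - \varsigma_{i-1})$, and the defining equation of the turning point, $\tilde{z}_i(\varsigma_i^c) = (\varsigma_i - \varsigma_i^c)\delta_i$. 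Equating the two, then substituting $z_i = \delta_i e_i$ and $\hat{\varsigma}_i = e_i + x_i$ by (\ref{equa-minimum-e-time}) and (\ref{equa-window-structure}), yields a single linear equation for $\varsigma_i^c - \varsigma_{i-1}$ whose solution is $x_i/(1-\beta)$. Plugging back gives $z_i^o = \frac{\beta}{1-\beta}\delta_i x_i$, matching the first branch of (\ref{equa-spot-workload-alone}).

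The only subtlety I anticipate is at the boundaries. At $\hat{\varsigma}_i = e_i/\beta$ (i.e.\ $x_i = e_i(1-\beta)/\beta$) the two branches of (\ref{equa-spot-workload-alone}) must agree; I would verify this by substituting and checking that $\frac{\beta}{1-\beta}\delta_i x_i = \delta_i e_i = z_i$, which is an easy sanity check rather than a real obstacle. At the other endpoint $\hat{\varsigma}_i = e_i$ (so $x_i = 0$) the third bullet of Proposition~\ref{proposi-spot} asserts $\varsigma_i^c = \varsigma_{i-1}$, which is consistent with $z_i^o = 0$ as given by the formula. Thus the main (and essentially only) work is the turning-point computation; the rest is invoking the already-established Proposition~\ref{proposi-spot} and bookkeeping the definitions.
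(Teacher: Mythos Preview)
Your proposal is correct and follows essentially the same route as the paper: invoke Proposition~\ref{proposi-spot} for the case split, conclude $z_i^o=z_i$ when $\hat{\varsigma}_i\ge e_i/\beta$, and in the two-phase regime solve the workload-balance equation to get $z_i^o=\frac{\beta}{1-\beta}\delta_i x_i$. The only cosmetic difference is that the paper quotes the identity $\tau_i s_i\beta=\frac{\beta}{1-\beta}(\delta_i\hat{\varsigma}_i-z_i)$ already derived inside the proof of Proposition~\ref{proposi-spot} (equation~(\ref{equa-spot-workload})) and then substitutes $\hat{\varsigma}_i=e_i+x_i$, whereas you solve explicitly for $\varsigma_i^c-\varsigma_{i-1}$ first; these are the same linear equation.
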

\begin{proof}
See Appendix~\ref{sec.spot-workload}.
\end{proof}
%\begin{proof}
%With different time window size, it is expected that task $i$ is finished by different composition of instances. We have the following observations by Proposition~\ref{proposi-spot}. If $\hat{\varsigma}_{i}=e_{i}$, no workload is processed by spot instances and $z_{i}^{o}=0$. If $\hat{\varsigma}_{i}\in \left(e_{i}, \frac{e_{i}}{\beta} \right)$, the workload processed by spot instances is defined in (\ref{equa-spot-workload}) and we have by (\ref{equa-minimum-e-time}) and (\ref{equa-window-structure}) that $z_{i}^{o}=\frac{\beta}{1-\beta}\cdot \delta_{i}\cdot x_{i}$. If $\hat{\varsigma}_{i} \geq \frac{e_{i}}{\beta}$, it is expected that task $i$ is completed by utilizing spot instances alone and the workload processed by spot instances is $z_{i}$. Specifically, when $\hat{\varsigma}_{i} = \frac{e_{i}}{\beta}$, we have by (\ref{equa-minimum-e-time}) and (\ref{equa-window-structure}) that $z_{i}^{o}=z_{i}=\frac{\beta}{1-\beta}\cdot \delta_{i}\cdot x_{i}$. The proposition thus holds.
%\end{proof}

Here, $z_{i}^{o}=\frac{\beta}{1-\beta}\cdot \delta_{i}\cdot x_{i}\in [0, z_{i}]$ when $\hat{\varsigma}_{i}\in \left[e_{i}, \frac{e_{i}}{\beta} \right]$. For each task $i$, Proposition~\ref{proposi-spot} and~\ref{proposi-spot-workload} show (\rmnum{1}) the minimum time window size needed to finish the task by only utilizing spot instances, and (\rmnum{2}) how the amount $z_{i}^{o}$ of workload processed by spot instances varies with the time window size and job characteristics.

Our objective is finding an allocation of deadlines $\varsigma_{1}$, $\varsigma_{2}$, $\cdots$, $\varsigma_{h}$ to maximize the utilization of spot instances. This is formulated as an integer linear program below:
\begin{align}\label{equa-linear-program}
\text{maximize} \sum\limits_{k=1}^{h}{z_{i}^{o}},
\end{align}
where $\varsigma_{1}$, $\varsigma_{2}$, $\cdots$, $\varsigma_{h}$ satisfy (\ref{equa-deadline-order}), (\ref{equa-relative-deadline}) and (\ref{equa-window-structure}) and $z_{i}^{o}$ satisfies (\ref{equa-spot-workload-alone}).

\begin{algorithm}[t]
%\DontPrintSemicolon
\SetKwInOut{Input}{Input}
\SetKwInOut{Output}{Output}

\Input{the availability $\beta$ of spot services, or the sufficiency index $\beta_{0}$ of self-owned instances ({\em $x=\beta$ or $\beta_{0}$})}
\Output{the time window sizes allocated to the $l$ tasks: $\hat{\varsigma}_{1}$, $\hat{\varsigma}_{2}$, $\cdots$, $\hat{\varsigma}_{l}$}

%\tcp{\footnotesize{Allocate self-owned instances to task $i$, if any}}

$\hat{\varsigma}_{i} \leftarrow e_{i}$ for all $i\in [1, l]$\;%\tcp*{\footnotesize{an initial setting of the deadlines to guarantee that each PT can be completed within the specified duration}}

$\omega \leftarrow d_{j}-\sum_{i=1}^{l}{e_{i}}$\;

\For{$k\leftarrow 1$ \KwTo $l$}{

    \If{$\omega> \frac{e_{i_{k}}}{x} - \hat{\varsigma}_{i_{k}}$}{

        $\hat{\omega}\leftarrow \frac{e_{i_{k}}}{x} - \hat{\varsigma}_{i_{k}}$,\,
        $\hat{\varsigma}_{i_{k}} \leftarrow \hat{\varsigma}_{i_{k}}+\hat{\omega}$,\,
        $\omega\leftarrow \omega - \hat{\omega}$\;
    }

    \If{$0 < \omega \leq \frac{e_{i_{k}}}{x} - \hat{\varsigma}_{i_{k}}$}{

        %$\hat{\omega}\leftarrow \frac{e_{i_{k}}}{\beta} - \hat{\varsigma}_{i_{k}}$,\,
        $\hat{\varsigma}_{i_{k}} \leftarrow \hat{\varsigma}_{i_{k}}+\omega$,\,
        $\omega\leftarrow 0$\;
    }

}

\caption{Dealloc($x$)}\label{Algo-deadline-allocation-basic}
%\caption{Dealloc($\beta$)}\label{Algo-deadline-allocation-basic}
\end{algorithm}

\vspace{0.4em}\noindent\textbf{Solution.} 
%For a general integer programming problem, it is known that the problem is NP-hard and there is no algorithm that can give an optimal solution in a time polynomial in the input size $l$.
Now, we derive a computationally efficient yet optimal solution to the integer linear program (\ref{equa-linear-program}). By Proposition~\ref{proposi-spot-workload}, we have the following observation. While the time window size $\hat{\varsigma}_{i}$ ranges in $[e_{i}, \frac{e_{i}}{\beta} ]$, the workload $z_{i}^{o}$ of task $i$ processed by spot instances is linearly proportional to $x_{i}$; the larger the parallelism bound $\delta_{i}$, the larger the value of $z_{i}^{o}$. While $\hat{\varsigma}_{i}$ exceeds $\frac{e_{i}}{\beta}$, the workload $z_{i}^{o}$ will not increase any more. We can thus propose a greedy strategy to optimally determine the allocation of deadlines to tasks, {which is presented in Algorithm~\ref{Algo-deadline-allocation-basic} with $\beta$ as an input {\em i.e.,} Dealloc($\beta$)}. Algorithm~\ref{Algo-deadline-allocation-basic} gives the optimal values of $\hat{\varsigma}_{1}, \hat{\varsigma}_{2}, \cdots, \hat{\varsigma}_{l}$, and we can thus derive the optimal values of $\varsigma_{1}$, $\varsigma_{2}$, $\cdots$, $\varsigma_{l}$ by (\ref{equa-deadline-order}) and (\ref{equa-relative-deadline}).

The idea of Dealloc($\beta$) is as follows. Let $\{i_{1}, i_{2}$, $\cdots, i_{l}\}=\{1, 2, \cdots$, $l\}$ be such that $\delta_{i_{1}} \geq \delta_{i_{2}} \geq \cdots\geq \delta_{i_{l}}$. It considers tasks in non-increasing order of their parallelism bounds (line 3) and allocates as much time as possible to the tasks with the largest parallelism bounds. Specifically,
\begin{itemize}
\item Each task $i$ is initially allocated a time window of size $\hat{\varsigma}_{i}^{\ast}=e_{i}$ to guarantee that it can be finished in the allocated window (line 1).
\item The remaining time $\omega = (d_{j}-a_{j}) - \sum_{i=1}^{l}{e_{i}}$ is allocated to the first $l^{\ast}$ tasks with the largest parallelism bounds where $l^{\ast}\in [1, l]$.
    \begin{itemize}
    \item if $l^{\ast}\geq 2$, we have for $k\in [1, l^{\ast}-1]$ that the task $i_{k}$ has a time window size $\hat{\varsigma}_{i_{k}}^{\ast}=\frac{e_{i_{k}}}{\beta}$ (lines 4-5) and the task $i_{l^{\ast}}$ has a time window size $\hat{\varsigma}_{i_{l^{\ast}}}^{\ast}$ $=$ $e_{i_{l^{\ast}}}$ $+$ $(\omega-\sum_{k=1}^{l^{\ast}-1}{(\hat{\varsigma}_{i_{k}}^{\ast}-e_{i_{k}})})$ (lines 6-7);
    \item if $l^{\ast}=1$, the first task has a time window size $\hat{\varsigma}_{i_{1}}^{\ast}$ $=$ $e_{i_{1}}$ $+$ $\omega$ (lines 6-7).
    \end{itemize}
    %Finally, if $l^{\ast}\geq 2$, the first $l^{\ast}-1$ tasks with the largest parallelism bounds are allocated more time such that for $k\in [1, l^{\ast}-1]$ the task $i_{k}$ has a time window size $\hat{\varsigma}_{i_{k}}^{\ast}=\frac{e_{i_{k}}}{\beta}$ (lines 4-5) and the task $i_{l^{\ast}}$ has a time window size $\hat{\varsigma}_{i_{l^{\ast}}}^{\ast}$ $=$ $e_{i_{l^{\ast}}}$ $+$ $\left(\omega-\sum_{k=1}^{l^{\ast}-1}{\left(\hat{\varsigma}_{i_{k}}^{\ast}-e_{i_{k}}\right)}\right)$ (lines 6-7); if $l^{\ast}=1$, the first task has a time window size $\hat{\varsigma}_{i_{1}}^{\ast}$ $=$ $e_{i_{1}}$ $+$ $\omega$ (lines 6-7).
\end{itemize}

\begin{proposition}\label{proposi-optimal-deadline-allocation}
Algorithm~\ref{Algo-deadline-allocation-basic} gives an optimal solution to the integer linear program (\ref{equa-linear-program}) with a time complexity of $\mathcal{O}(n)$.
\end{proposition}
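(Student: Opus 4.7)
The plan is to recast the integer linear program in the simpler variables $x_i := \hat{\varsigma}_i - e_i \ge 0$ and recognize it as a capped fractional knapsack. From~(\ref{equa-spot-workload-alone}), $z_i^o$ grows linearly in $x_i$ with slope $c_i := \beta\delta_i/(1-\beta)$ until the saturation point $x_i^{\max} := e_i(1-\beta)/\beta$, at which $z_i^o$ reaches $z_i$ and stops increasing; equivalently $z_i^o = \min\{c_i x_i,\, z_i\}$. Combining~(\ref{equa-deadline-order}) and~(\ref{equa-relative-deadline}), the only binding budget on the $x_i$'s is $\sum_{i=1}^l x_i \le \omega$, where $\omega := (d_j - a_j) - \sum_i e_i$ is the total slack (precisely the quantity tracked in line~2 of Algorithm~\ref{Algo-deadline-allocation-basic}). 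The program thus reduces to
\begin{equation*}
\max \sum_{i=1}^l \min\{c_i x_i,\, z_i\}\quad\text{s.t.}\quad 0\le x_i\le x_i^{\max},\; \sum_{i=1}^l x_i \le \omega.
\end{equation*}

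First I would prove optimality of Dealloc$(\beta)$ by a classical exchange argument. Each unit of $x_i$ contributes a constant marginal benefit $c_i$ while $x_i < x_i^{\max}$ and $0$ thereafter, and $c_i$ is strictly increasing in $\delta_i$. Let $\mathbf{x}^g$ denote the vector produced by Algorithm~\ref{Algo-deadline-allocation-basic} and $\mathbf{x}^*$ any optimal solution. In the non-increasing $\delta$-order $i_1,\ldots,i_l$, take the smallest index $k$ with $x^*_{i_k} \neq x^g_{i_k}$; greedy saturates earlier positions first, so $x^*_{i_k} < x^g_{i_k}$, and by conservation of total mass some $k' > k$ satisfies $x^*_{i_{k'}} > x^g_{i_{k'}} \ge 0$, whence $\delta_{i_{k'}} \le \delta_{i_k}$. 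Shifting $\epsilon := \min\{x^g_{i_k} - x^*_{i_k},\, x^*_{i_{k'}} - x^g_{i_{k'}}\}$ units of mass from $i_{k'}$ to $i_k$ preserves feasibility (the cap $x^{\max}_{i_k}$, non-negativity, and the budget all hold) and changes the objective by $(c_{i_k} - c_{i_{k'}})\epsilon \ge 0$. Since at least one coordinate is brought into agreement with $\mathbf{x}^g$ at each exchange, the process terminates in at most $l$ steps with $\mathbf{x}^* = \mathbf{x}^g$, proving optimality.

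Second I would verify feasibility of Algorithm~\ref{Algo-deadline-allocation-basic}'s output: line~5 caps each $\hat{\varsigma}_{i_k}$ at $e_{i_k}/\beta$, so $x_{i_k}\le x^{\max}_{i_k}$; $\omega$ is decremented by exactly the amount added to $\hat{\varsigma}_{i_k}$, so it stays non-negative and $\sum_i \hat{\varsigma}_i \le d_j - a_j$. Then setting $\varsigma_i := a_j + \sum_{k\le i}\hat{\varsigma}_k$ yields deadlines satisfying~(\ref{equa-deadline-order}) and~(\ref{equa-relative-deadline}).

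For the complexity claim: once the tasks are visited in non-increasing $\delta$-order, each iteration of the \textbf{for}-loop performs $O(1)$ work, yielding the $O(l) = O(n)$ bound asserted in the statement. (A preliminary comparison sort to produce this order would add an $O(l\log l)$ overhead, which I would remark is strictly the overall running time unless the $\delta_i$ admit linear-time sorting.) The main obstacle I anticipate is the boundary case where the last greedy-filled task $i_{l^{\ast}}$ is only partially saturated in both $\mathbf{x}^g$ and $\mathbf{x}^*$; taking $\epsilon$ as the minimum of the two surpluses, as above, forces this case to be resolved in a finite number of exchanges and so is handled uniformly by the same argument.
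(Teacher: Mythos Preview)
Your proof is correct and follows essentially the same exchange argument as the paper's own proof in Appendix~\ref{sec.optimal-deadline-allocation}: both shift slack from a lower-$\delta$ task to a higher-$\delta$ task and invoke Proposition~\ref{proposi-spot-workload} to conclude the objective does not decrease, though you carry out the argument more carefully (explicit knapsack reformulation, feasibility check, termination bound). Your parenthetical remark that the stated $\mathcal{O}(n)$ bound glosses over the sorting step needed to obtain the order $i_1,\ldots,i_l$ is well taken; the paper's proof does not address this.
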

\begin{proof}
See Appendix~\ref{sec.optimal-deadline-allocation}.
%We will prove by contradiction that any solution different from the one of Algorithm~\ref{Algo-deadline-allocation-basic} cannot lead to that more workload is processed by spot instances. Suppose that in an optimal solution there exists a task $i_{k^{\prime}}$, where $k^{\prime}\leq l^{\ast}$, whose time window size is smaller than $\hat{\varsigma}_{i_{k^{\prime}}}^{\ast}$. With abuse of notation, let $k^{\prime}$ be the maximum such integer in $[1, l^{\ast}]$. We can find some latter tasks $i$, where $i> l^{\ast}$, whose time window sizes are larger than $\hat{\varsigma}_{i}^{\ast}$. Then, we reduce the allocated time window sizes of these tasks by $\hat{\varsigma}_{i_{k^{\prime}}}^{\ast}-\hat{\varsigma}_{i_{k^{\prime}}}$ while keeping their sizes $\geq e_{i}$; correspondingly, we increase the time window size $\hat{\varsigma}_{i_{k^{\prime}}}$ to $\hat{\varsigma}_{i_{k^{\prime}}}^{\ast}$. By Proposition~\ref{proposi-spot-workload}, this will lead to that the workload processed by spot instances is the same as or larger than the workload before the re-allocation of time window sizes, since the parallelism bound of task $i_{k^{\prime}}$ is no smaller than the latter ones. Thus, the solution produced by Algorithm~\ref{Algo-deadline-allocation-basic} can lead to that the most workload is processed by spot instances.
\end{proof}

\vspace{0.01em}\noindent\textbf{Example.} Now, we continue the example in Section~\ref{sec.example-1} and show by Algorithm~\ref{Algo-deadline-allocation-basic} and Proposition~\ref{proposi-spot} the expected optimal deadline and instance allocation to the job $j$, which is illustrated in Fig.~\ref{Fig-Optimal-Instance-Allocation-Process}. The optimal deadline allocation is as follows: $\varsigma_{1}=\frac{4}{3}$, $\varsigma_{2}=0.5$, $\varsigma_{3}=\frac{5}{3}$, and $\varsigma_{4}=0.5$. The first task requests two spot instances in $\left[0, \frac{7}{6}\right]$ in the first phase of allocation and two on-demand instances in $\left[\frac{7}{6}, \frac{4}{3}\right]$ in the second phase of allocation; the second task simply requests one on-demand instance in $\left[\frac{4}{3}, \frac{11}{6}\right]$; the third requests three spot instances in $\left[\frac{11}{6}, \frac{21}{6}\right]$; the fourth requests one on-demand instance in $\left[\frac{21}{6}, 4\right]$. Finally, the amount of workload processed by spot instances is $\frac{22}{6}$.

\begin{figure}[t]
    %\centering
        \centering
        \includegraphics[height=0.78in]{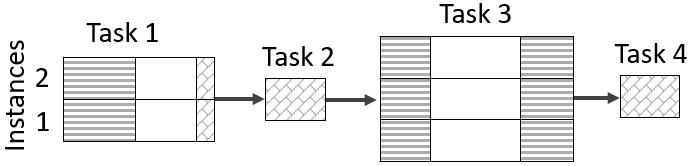}

    \caption{Optimal Processing of a Chain of Tasks: the diagonal brick, and horizonal stripe areas denote the workloads processed by on-demand and spot instances respectively.}
    \label{Fig-Optimal-Instance-Allocation-Process}
\end{figure}

\subsection{Incorporating Self-Owned Instances}

In this section, we extend the framework of Section~\ref{sec.case-spot} to the case with additional self-owned instances.

\subsubsection{Allocation of Self-owned Instances}
\label{sec.allocation-self-owned}

In this subsubsection, we consider the allocation of self-owned instances to a single task $i$ to be finished in $[\varsigma_{i-1},$ $\varsigma_{i}]$. We will give a policy that realizes Principle~\ref{prin-one} effectively. Specifically, the policy needs to guarantee that (\rmnum{1}) self-owned instances are fully utilized by tasks and (\rmnum{2}) in the meantime, the overall opportunity of all tasks utilizing spot instances is maximized. Like \cite{Jain14,Wu20a}, in the subsequent analysis, the issue of rounding the allocations of a job to integers is ignored temporarily for simplicity; in reality, we can round {up} the allocations to integers, {without affecting} the effectiveness of our conclusions much as shown by our experiments.

We will use a common parameter $\beta_{0}$ to determine the amount $r_{i}$ of self-owned instances allocated to each task $i$. $r_{i}$ is defined by a function $f(x)$ when $x=\beta_{0}$. The function $f(x)$ relates to the characteristics of task $i$ and is defined as follows:
\begin{align}\label{equa-f-x}
 f(x) = \max  \left\{  \frac{z_{i}-\delta_{i}\cdot \hat{\varsigma}_{i}\cdot x}{\hat{\varsigma}_{i}\cdot (1-x)},\,\, 0 \right\}.
\end{align}
When $x=0$, $f(x)=\frac{z_{i}}{\hat{\varsigma}_{i}}$; when $x\geq \frac{e_{i}}{\hat{\varsigma}_{i}}$, $f(x)=0$, where $e_{i}$ is given in (\ref{equa-minimum-e-time}). The value of $f(x)$ ranges in $[0, \frac{z_{i}}{\hat{\varsigma}_{i}}]$. We refer to the parameter $\beta_{0}$ as the {\em sufficiency index} of self-owned instances. As we will see, given a set of jobs arriving over time, the value of $\beta_{0}$ is small if self-owned instances are sufficient and large otherwise. %In the former, each task is allocated sufficient self-owned instances such that {\color{blue}the task's remaining workload can be finished by utilizing cheap spot instances alone}. In the latter, each task is allocated a small number of self-owned instances such that {\color{blue}some costly on-demand instances have to be consumed to finish the task's remaining workload}. As a result, no tasks are overly allocated and a balanced-allocation is achieved.

\begin{proposition}\label{property-self-owned}
The function $f(x)$ has the following properties:
\begin{itemize}
\item $f(\beta)$ is the minimum number such that, after task $i$ is allocated $f(\beta)$ self-owned instances, it is expected that task $i$ can be finished in $[\varsigma_{i-1},$ $\varsigma_{i}]$ by only requesting to utilize $\delta_{i}-f(\beta)$ spot instances without utilizing costly on-demand instances.
\vspace{0.18em}\item $f(x)$ is non-increasing in $x$.
\end{itemize}
\end{proposition}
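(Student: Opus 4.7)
\textbf{Proof plan for Proposition~\ref{property-self-owned}.}

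My plan is to verify the two bullet points separately. For the first claim, I will characterize the condition under which task $i$, after receiving $r_i$ self-owned instances, can expect to be finished in $[\varsigma_{i-1},\varsigma_i]$ using only spot instances for the residual workload. With $r_i$ self-owned instances always active in the window, the workload they process is exactly $r_i\cdot\hat{\varsigma}_i$, so the residual workload is $\tilde{z}_i=z_i-r_i\cdot\hat{\varsigma}_i$, and the residual parallelism bound is $\delta_i-r_i$. Applying Proposition~\ref{proposi-spot} to the residual task (whose minimum execution time is $\tilde{z}_i/(\delta_i-r_i)$), the condition for finishing with spot instances alone is $\hat{\varsigma}_i\ge \tilde{z}_i/[\beta(\delta_i-r_i)]$, i.e.\ $\beta\cdot\hat{\varsigma}_i\cdot(\delta_i-r_i)\ge z_i-r_i\cdot\hat{\varsigma}_i$. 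I will rearrange this to the linear inequality $r_i\cdot\hat{\varsigma}_i\cdot(1-\beta)\ge z_i-\delta_i\cdot\hat{\varsigma}_i\cdot\beta$, yielding the minimum feasible $r_i$ as $\max\{(z_i-\delta_i\hat{\varsigma}_i\beta)/[\hat{\varsigma}_i(1-\beta)],\,0\}=f(\beta)$. The $\max$ with $0$ is needed because when $\hat{\varsigma}_i\ge e_i/\beta$ the spot instances alone suffice by Proposition~\ref{proposi-spot}, so no self-owned instance is required.

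For the second claim, I will show that the inner expression $g(x)=(z_i-\delta_i\hat{\varsigma}_i x)/[\hat{\varsigma}_i(1-x)]$ is non-increasing in $x\in[0,1)$, which immediately implies the same for $f(x)=\max\{g(x),0\}$. A short derivative computation gives
\begin{equation*}
g'(x)=\frac{z_i-\delta_i\hat{\varsigma}_i}{\hat{\varsigma}_i(1-x)^2}.
\end{equation*}
Since task $i$ is required to fit into its window, $\hat{\varsigma}_i\ge e_i=z_i/\delta_i$ by (\ref{equa-relative-deadline}), so $\delta_i\hat{\varsigma}_i\ge z_i$ and thus $g'(x)\le 0$. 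Hence $g$ is non-increasing on $[0,1)$, and taking the pointwise maximum with the constant $0$ preserves this monotonicity, establishing the second bullet.

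I do not anticipate a real obstacle here: the first bullet is essentially an algebraic inversion of the spot-only feasibility condition already established in Proposition~\ref{proposi-spot}, and the second is a one-line calculus check once the sign of $z_i-\delta_i\hat{\varsigma}_i$ is pinned down via $\hat{\varsigma}_i\ge e_i$. The only subtlety worth flagging is the boundary behaviour at $x=e_i/\hat{\varsigma}_i$, where $g(x)$ first hits $0$ and the $\max$ becomes active; I will note that the formula correctly handles the regime $\hat{\varsigma}_i\ge e_i/\beta$ (where $f(\beta)=0$) so that Proposition~\ref{proposi-spot}'s case analysis is respected.
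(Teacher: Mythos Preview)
Your proposal is correct and follows essentially the same approach as the paper: for the first bullet you invert the spot-only feasibility condition $\beta(\delta_i-r_i)\hat{\varsigma}_i\ge z_i-r_i\hat{\varsigma}_i$ exactly as the paper does, and for the second bullet both arguments hinge on the same inequality $\delta_i\hat{\varsigma}_i\ge z_i$ (the paper reads off monotonicity after rewriting $f(x)=\max\{\delta_i-(\delta_i\hat{\varsigma}_i-z_i)/[\hat{\varsigma}_i(1-x)],\,0\}$, whereas you compute the derivative, but these are cosmetically different versions of the same observation).
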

\begin{proof}
See Appendix~\ref{sec.property-self-owned}.
%We also have $f(x)=\max  \left\{  \delta_{i} - \frac{\delta_{i}\cdot \hat{\varsigma}_{i}-z_{i}}{\hat{\varsigma}_{i}\cdot (1-x)},\,\, 0 \right\}$. Since $\delta_{i}\cdot \hat{\varsigma}_{i}-z_{i}\geq 0$, the proposition's second point holds. Next, we prove the first point. Suppose task $i$ is allocated $r_{i}$ self-owned instances. In order to ensure that the remaining workload $z_{i}-r_{i}\cdot \hat{\varsigma}_{i}$ of task $i$ can be finished by totally utilizing spot instances, we have $\beta\cdot (\delta_{i}-r_{i}) \cdot \hat{\varsigma}_{i} \geq z_{i}-r_{i}\cdot \hat{\varsigma}_{i}$. We further have
%\begin{align*}
%r_{i} \geq \frac{z_{i}-\delta_{i}\cdot \hat{\varsigma}_{i}\cdot \beta}{\hat{\varsigma}_{i}\cdot (1-\beta)}.
%\end{align*}
%Since $r_{i}\geq 0$, the proposition's first point holds.
\end{proof}

Now, we introduce the policy. Let $N(t)$ denote the number of self-owned instances available at time $t$ and $N(t_{1}, t_{2})$ be the maximum number of self-owned instances that are available in the entire time interval $[t_{1}, t_{2}]$, {\em i.e.}, $$N(t_{1}, t_{2})=\min\limits_{t\in [t_{1}, t_{2}]}{N(t)}.$$ The number $r_{i}$ of self-owned instances allocated to task $i$ is defined as follows:
\begin{align}\label{policy-self-owned}
r_{i} = \min \{ f(\beta_{0}),\,  N(\varsigma_{i-1}, \varsigma_{i}), \delta_{i}\}.
\end{align}
Task $i$ can use these instances in the period $[\varsigma_{i-1}, \varsigma_{i}]$.

We show by Proposition~\ref{property-self-owned} that the policy (\ref{policy-self-owned}) can effectively realize Principle~\ref{prin-one}.
%Self-owned instances should be fully utilized since they are the cheapest. Meanwhile, the allocation of self-owned instances should be balanced among tasks, according the capability and need of each individual task to utilize spot instances.
$\beta$ represents the availability of spot instances. $f(x)$ is non-increasing in $x$. {\em In the case} that sufficient self-owned instances are available, we can set $\beta_{0}$ to a value smaller than $\beta$ and each task $i$ is assigned more than $f(\beta)$ self-owned instances;
%\begin{itemize}
%\item we can set $\beta_{0}$ to a value smaller than $\beta$ and by Proposition~\ref{property-self-owned}, each task $i$ will be assigned more than $f(\beta)$ self-owned instances;
%\end{itemize}
as a result, all tasks can be expected to be finished by utilizing spot instances alone, without consuming costly on-demand instances. In the meantime, by setting $\beta_{0}$ to a properly small value, we can guarantee that self-owned instances are fully utilized by allocating a large number of self-owned instances to each task.

{\em In the case} that self-owned instances are insufficient, we can set $\beta_{0}$ to a value larger than $\beta$, and each task $i$ is assigned less than $f(\beta)$ self-owned instances;
%\begin{itemize}
%\item we can set $\beta_{0}$ to a value larger than $\beta$, and each task $i$ is assigned less than $f(\beta)$ self-owned instances;
%\end{itemize}
here, all tasks are expected to consume some costly on-demand instances. No tasks are assigned more than $f(\beta)$ self-owned instances. Allocating more than $f(\beta)$ self-owned instances to a task $i$ can lead to a waste of self-owned instances since they can be allocated to other tasks for processing the workload that will otherwise be processed by costly on-demand instances. Finally, as shown in the two cases above, if the policy (\ref{policy-self-owned}) is used, no tasks are overly allocated and a balanced-allocation is achieved to well realize Principle~\ref{prin-one}. This will further be validated in our third experiment of Section~\ref{sec.evaluation}.

\subsubsection{Deadline Allocation}

In the last subsubsection, we have given an explicit form of the policy for self-owned instances. Built on such a policy, we derive in this subsubsection the expected optimal allocation of deadlines under some mild assumptions. %The effectiveness of such an allocation will further be validated by our experiments.

A task $i$ is allocated to utilize $r_{i}$ self-owned instances in $[\varsigma_{i-1},$ $\varsigma_{i}]$. Task $i$ is divisible and afterwards task $i$ can be viewed as a new task with a parallelism bound $\tilde{\delta}_{i}=\delta_{i}-r_{i}$ and a (remaining) workload/size $\tilde{z}_{i}=z_{i}$ $-$ $r_{i}\cdot \hat{\varsigma}_{i}$, which will be processed by spot and on-demand instances alone. The number $r_{i}$ is defined in (\ref{policy-self-owned}). Within the parallelism bound, it is the minimum of $f(\beta_{0})$ and the maximum number $N(\varsigma_{i-1}, \varsigma_{i})$ of self-owned instances available in $[\varsigma_{i-1}, \varsigma_{i}]$. When a CSP has sufficient self-owned instances, $\beta_{0}$ is set to a value smaller than $\beta$ and a task $i$ is expected to be assigned more than $f(\beta)$ self-owned instances. When a CSP has insufficient self-owned instances, $\beta_{0}$ is set to a value larger than $\beta$ and a task $i$ is expected to be assigned less than $f(\beta)$ self-owned instances.

In any case, by choosing a properly large or small value for $\beta_{0}$, $r_{i}$ can equal or be close to $f(\beta_{0})$. Thus, for analytical tractability, we assume that each task $i$ is assigned $r_{i}=f(\beta_{0})$ self-owned tasks to be utilized in $[\varsigma_{i-1}, \varsigma_{i}]${, although the policy that is actually used in our framework is defined by (\ref{policy-self-owned})}. This helps obtain an informed policy to allocate deadlines to the tasks of a job. The effectiveness of {the resulting policy} will further be validated by our experiments (see Experiments 2 and 3 in Section~\ref{sec.evaluation-results}). Each job is assigned a specific $\beta_{0}$. Depending on the relation between the availability $\beta$ of spot instances and the sufficiency index $\beta_{0}$ of self-owned instances, we have the following conclusion on the amount $z_{i}^{o}$ of workload processed by spot instances after each task $i$ is allocated $r_{i}$ self-owned instances.

\begin{proposition}\label{proposi-workload-self-owned}
Depending on the time window size $\hat{\varsigma}_{i}=e_{i}+x_{i}$, in the case that $\beta_{0}\leq \beta$, we have
\begin{align}
z_{i}^{o} =
\begin{cases}
& \frac{\beta_{0}}{1-\beta_{0}}\cdot \delta_{i}\cdot x_{i}   \enskip\enskip\enskip  \text{ if } \hat{\varsigma}_{i}\in \left[e_{i},\, \frac{e_{i}}{\beta_{0}}\right] \\
& \enskip\enskip\enskip\enskip  z_{i} \enskip\enskip\enskip\enskip\enskip\enskip\enskip\enskip   \text{ if } \hat{\varsigma}_{i} > \frac{e_{i}}{\beta_{0}}.
\end{cases}
\end{align}
In the case that $\beta < \beta_{0}$, we have
\begin{align}
z_{i}^{o} =
\begin{cases}
& \frac{\beta}{1-\beta}\cdot \delta_{i}\cdot x_{i}   \enskip\enskip\enskip  \text{ if } \hat{\varsigma}_{i}\in \left[e_{i},\, \frac{e_{i}}{\beta}\right] \\
& \enskip\enskip\enskip\enskip z_{i} \,\enskip\enskip\enskip\enskip\enskip\enskip\enskip    \text{ if } \hat{\varsigma}_{i} > \frac{e_{i}}{\beta}.
\end{cases}
\end{align}
\end{proposition}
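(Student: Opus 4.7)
The approach is to reduce the claim to Proposition~\ref{proposi-spot-workload} applied to the ``residual'' task obtained after the self-owned allocation. Once $r_{i}=f(\beta_{0})$ self-owned instances are assigned to task $i$ over $[\varsigma_{i-1},\varsigma_{i}]$, what remains is a task with parallelism bound $\tilde{\delta}_{i}=\delta_{i}-r_{i}$, residual workload $\tilde{z}_{i}=z_{i}-r_{i}\hat{\varsigma}_{i}$, and minimum execution time $\tilde{e}_{i}=\tilde{z}_{i}/\tilde{\delta}_{i}$, to be served only by spot and on-demand instances in the same window $\hat{\varsigma}_{i}$. Proposition~\ref{proposi-spot-workload} then yields $z_{i}^{o}$ in terms of $\tilde{\delta}_{i}$, $\tilde{e}_{i}$, $\hat{\varsigma}_{i}$ and $\beta$, so the whole proof is a careful substitution together with a comparison of the two thresholds $e_{i}/\beta_{0}$ and $e_{i}/\beta$.

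First I would split on whether $\hat{\varsigma}_{i}\leq e_{i}/\beta_{0}$ or $\hat{\varsigma}_{i}>e_{i}/\beta_{0}$, because this is precisely the boundary at which the $\max$ in (\ref{equa-f-x}) changes branch, so that $f(\beta_{0})>0$ in the former case and $f(\beta_{0})=0$ in the latter. On the branch $\hat{\varsigma}_{i}\in[e_{i},e_{i}/\beta_{0}]$, I would plug $r_{i}=(z_{i}-\delta_{i}\hat{\varsigma}_{i}\beta_{0})/[\hat{\varsigma}_{i}(1-\beta_{0})]$ into the definitions of $\tilde{\delta}_{i}$ and $\tilde{z}_{i}$ and simplify using $\delta_{i}\hat{\varsigma}_{i}-z_{i}=\delta_{i}x_{i}$ to obtain
\begin{align*}
\tilde{\delta}_{i}=\frac{\delta_{i}x_{i}}{\hat{\varsigma}_{i}(1-\beta_{0})},\qquad
\tilde{z}_{i}=\frac{\beta_{0}\delta_{i}x_{i}}{1-\beta_{0}},\qquad
\tilde{e}_{i}=\beta_{0}\hat{\varsigma}_{i}.
\end{align*}
On the branch $\hat{\varsigma}_{i}>e_{i}/\beta_{0}$, trivially $r_{i}=0$, so $\tilde{\delta}_{i}=\delta_{i}$, $\tilde{z}_{i}=z_{i}$, $\tilde{e}_{i}=e_{i}$.

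Next I would feed these residual parameters into Proposition~\ref{proposi-spot-workload} and split on the sign of $\beta_{0}-\beta$. In the regime $\beta_{0}\leq\beta$ on the first branch, $\tilde{e}_{i}/\beta=\beta_{0}\hat{\varsigma}_{i}/\beta\leq\hat{\varsigma}_{i}$, so Proposition~\ref{proposi-spot-workload} gives $z_{i}^{o}=\tilde{z}_{i}=\beta_{0}\delta_{i}x_{i}/(1-\beta_{0})$; on the second branch $\hat{\varsigma}_{i}>e_{i}/\beta_{0}\geq e_{i}/\beta$, so $z_{i}^{o}=\tilde{z}_{i}=z_{i}$. In the regime $\beta<\beta_{0}$ on the first branch, $\tilde{e}_{i}/\beta=\beta_{0}\hat{\varsigma}_{i}/\beta>\hat{\varsigma}_{i}$, so the linear case of Proposition~\ref{proposi-spot-workload} applies with $\tilde{x}_{i}=\hat{\varsigma}_{i}-\tilde{e}_{i}=(1-\beta_{0})\hat{\varsigma}_{i}$, and $z_{i}^{o}=\frac{\beta}{1-\beta}\tilde{\delta}_{i}\tilde{x}_{i}=\frac{\beta}{1-\beta}\delta_{i}x_{i}$; on the second branch one has to subdivide further: for $\hat{\varsigma}_{i}\in(e_{i}/\beta_{0},e_{i}/\beta]$ the linear case gives $\frac{\beta}{1-\beta}\delta_{i}x_{i}$ directly, while for $\hat{\varsigma}_{i}>e_{i}/\beta$ it gives $z_{i}$.

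The only real obstacle is the case bookkeeping with the two thresholds $e_{i}/\beta$ and $e_{i}/\beta_{0}$ and verifying that the intermediate sub-interval $(e_{i}/\beta_{0},e_{i}/\beta]$ in the regime $\beta<\beta_{0}$ still produces $\frac{\beta}{1-\beta}\delta_{i}x_{i}$, so that the stated two-piece form survives. All remaining work is routine algebra, so the plan is to display the substitution once, handle both regimes in parallel, and compress the threshold comparisons into a single lemma-like paragraph before reading off the conclusion.
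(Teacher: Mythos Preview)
Your proposal is correct. The paper's own proof is organized a little differently: for $\beta_{0}\leq\beta$ it does not go through Proposition~\ref{proposi-spot-workload} but invokes Proposition~\ref{property-self-owned} to conclude directly that after allocating $r_{i}=f(\beta_{0})\geq f(\beta)$ self-owned instances the residual workload is processed entirely by spot, so $z_{i}^{o}=z_{i}-r_{i}\hat{\varsigma}_{i}$; for $\beta<\beta_{0}$ it splits at the output threshold $e_{i}/\beta$ and, on the sub-interval $[e_{i},e_{i}/\beta)$, rewrites the workload-balance equation $z_{i}-r_{i}\hat{\varsigma}_{i}=\beta\tau_{i}(\delta_{i}-r_{i})+(\hat{\varsigma}_{i}-\tau_{i})(\delta_{i}-r_{i})$ from scratch to obtain $z_{i}^{o}=\frac{\beta}{1-\beta}\delta_{i}x_{i}$, rather than quoting Proposition~\ref{proposi-spot-workload}. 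Your reduction of both regimes to Proposition~\ref{proposi-spot-workload} via the residual parameters $\tilde{\delta}_{i},\tilde{z}_{i},\tilde{e}_{i}=\beta_{0}\hat{\varsigma}_{i}$ is more modular and avoids re-deriving the turning-point computation; the price is the extra sub-split at $e_{i}/\beta_{0}$ in the regime $\beta<\beta_{0}$, which the paper sidesteps by splitting at $e_{i}/\beta$ from the outset. Substantively the two arguments coincide.
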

\begin{proof}
See Appendix~\ref{sec.workload-self-owned}.
\end{proof}

Proposition~\ref{proposi-workload-self-owned} has the following implications. In spite of the relation of $\beta_{0}$ and $\beta$, the workload $z_{i}^{o}$ processed by spot instances is linearly proportional to the parallelism bound $\delta_{i}$ and the additional available time $x_{i}$ for executing task $i$ until some threshold, after which the workload $z_{i}^{o}$ keeps constant and stops increasing with $x_{i}$. This is the same as the case in Section~\ref{sec.case-spot} where only spot and on-demand instances are utilized. Thus, in the case with self-owned instances ({\em i.e.,} $r>0$), we can still apply Algorithm~\ref{Algo-deadline-allocation-basic} to determine the optimal allocation of deadlines: the specific way is presented in lines 1-5 of Algorithm~\ref{Algo-instance-allocation-process}.

\subsection{Summarizing Deadline and Instance Allocation}

In this subsection, we summarize the process of allocating instances to a chain of tasks.

As the time horizon expands, we check whether specific events are triggered at every moment $t$ and take corresponding allocation actions, which are presented in Algorithm~\ref{Algo-instance-allocation-process}. Generally, when a job $j$ arrives, we first determine its deadline allocation. For all $i\in$ $[1, l]$, its $i$-th task can be executed when its preceding tasks have been finished if any. The $l$ tasks are executed one by one. In particular, when $t=a_{j}$, job $j$ arrives and we first determine the allocation of deadlines $\varsigma_{1}, \varsigma_{2}, \cdots, \varsigma_{l}$ to its $l$ tasks (lines 1-5): when only on-demand and spot instances are utilized, execute lines 1-3 since $r=0$; otherwise, execute lines 1-5 since $r>0$.

Recall that $\varsigma_{0}=a_{j}$.
%For $i\in [1, l]$, let $\nu_{i}$ denote the actual time at which task $i$ is finished where $\nu_{i}\leq \varsigma_{i}$. For simplicity of presentation, let $\nu_{0}$ $=$ $a_{j}$, denoting the arrival time of job $j$.
For all $i\in [1, l]$, when $t$ $=$ $\varsigma_{i-1}$, it means that either job $j$ just arrives if $i=1$ or the ($i-1$)-th task has been finished if $i\geq 2$; then, the execution of the $i$-th task begins and we determine the instance allocation to task $i$ (lines 6-15). In the case that there are self-owned instances ({\em i.e.}, $r>0$), when $t=\varsigma_{i-1}$, task $i$ is first allocated $r_{i}$ self-owned instances in $[\varsigma_{i-1}, \varsigma_{i}]$, where $r_{i}$ is given in (\ref{policy-self-owned}) (lines 6-8); otherwise, $r_{i}=0$ (lines 9-10). If $r_{i}>0$, task $i$ can be viewed as a new task with reduced parallelism bound and task size that will only be processed by spot and on-demand instances. Except the possible workload processed by self-owned instances, the remaining workload of task $i$ to be processed at time $t$ is denoted by $\tilde{z}_{i}(t)$. While task $i$ is being executed at time $t\in [\varsigma_{i-1}, \varsigma_{i}]$, if $\tilde{z}_{i}(t)=0$, no actions are taken to request spot and on-demand instances since the current allocation of instances is enough to finish task $i$; otherwise,
%by Proposition~\ref{proposi-spot} and the instance allocation process described in Section~\ref{sec.instance_allocation_process},
we have
\begin{itemize}
\item if there is flexibility for task $i$ to utilize spot instances at time $t$ by Definition~\ref{def-flexibility}, request to utilize $\delta_{i}-r_{i}$ spot instances (lines 12-13).
\item otherwise, there is no such flexibility and $t$ is the turning point of task $i$; by Definition~\ref{def-allocation-process}, stop requesting spot instances and turn to utilize $\delta_{i}-r_{i}$ on-demand instances in $[t, \varsigma_{i}]$ (lines 14-15).
\end{itemize}

\begin{algorithm}[t]
%\DontPrintSemicolon
\SetKwInOut{Input}{Input}
\SetKwInOut{Output}{Output}

\tcc{\footnotesize{Check the possible events at time $t$}}

   \If{there exists some job $j$ such that $t=a_{j}$}{

    \tcp{\footnotesize{Allocate deadlines to job $j$}}

       \If{$r=0$ or ($r>0$ and $\beta < \beta_{0}$)}{
           Call Dealloc($\beta$), presented in Algorithm~\ref{Algo-deadline-allocation-basic}\;
       }
       \If{$r>0$ and $\beta_{0}\leq \beta$}{
           Call Dealloc($\beta_{0}$), presented in Algorithm~\ref{Algo-deadline-allocation-basic}\;
       }
   }

    \If{$t=\varsigma_{i-1}$}{

       \tcp{\footnotesize{Allocate self-owned instances to task $i$, if any}}

       \If{$r>0$}{
           Allocate $r_{i}$ self-owned instances to task $i$ where $r_{i}$ is given in (\ref{policy-self-owned})\;
       }

       \If{$r=0$}{
           No self-owned instances are allocated to task $i$ where $r_{i}=0$\;
       }
    }

    \If{$t\in [\varsigma_{i}, \varsigma_{i+1}]$ and $\tilde{z}_{i}(t)>0$}{

       \tcp{\footnotesize{Determine the allocation action of spot and on-demand instances}}

        \If{by Definition~\ref{def-flexibility}, there is flexibility for task $i$ to utilize spot instances at time $t$}{

            Bid a price for $\delta_{i}-r_{i}$ spot instances\;
        }
        \Else{
           \tcp{\footnotesize{$t$ is the turning point of task $i$}}
            Request $\delta_{i}-r_{i}$ on-demand instances in the time window $[t, \varsigma_{i}]$\;
        }
    }

\caption{Deadline and Instance Allocation}\label{Algo-instance-allocation-process}
\end{algorithm}

%\subsection{Online Learning for Generalized Case}

%In the last section, we propose a series of parametric policies to optimize the allocation of various instances to a chain of tasks.

\section{Online Learning for Generalized Case}
\label{sec.generalized-online-learning}

{In the last section, we propose a series of parametric policies for allocating instances to a chain of tasks, which are the core technical contribution of this paper.} Supported by {two existing techniques directly from \cite{Nagarajan13a,Jain14}}, we can further obtain an integrated framework to process general DAG jobs, which is of great interest in practice.
%These two techniques are directly from \cite{Nagarajan13a,Jain14}.
In this section, we introduce the two techniques briefly, although they are not the main contribution of this paper. Their formal description is given in {Appendix~\ref{append-description}}.

%Although these two techniques are not the main contribution of this paper, we introduce them briefly
%One technique originates from combinatorial optimization for designing approximation algorithms \cite{Nagarajan13a}. The other can be used to infer the optimal parameters in cloud service scenarios. Each of these
%In this section, we will introduce these two techniques. One techn

%In this section, we extend the framework of Section~\ref{sec.scheduling} to the general case of a DAG-structured job, and describe the application of online learning to infer the optimal parametric policies while processing a set of jobs arriving over time.

\vspace{0.35em}\noindent\textbf{Job Transformation.} The technique of Nagarajan {\em et al.} \cite{Nagarajan13a} is used to transform a general DAG job $j$ to a virtual job $j^{\prime}$ with a chain precedence constraint, also called a pseudo-job. Any feasible schedule of the pseudo-job $j^{\prime}$ is also a feasible schedule of the DAG job $j$, with their parallelism, precedence and deadline constraints respected. While transforming $j$ to $j^{\prime}$, the high-level idea is as follows. Consider a virtual schedule of $j$, also called a pseudo-schedule: each task $i$ of $j$ is allocated $\delta_{i}$ instances and executed as early as possible. Each pseudo-task of $j^{\prime}$ consists of parts of the tasks of $j$ that are executed in the same time interval. There are multiple time intervals between the starting and completion times of the pseudo-schedule. These intervals correspond to multiple pseudo-tasks that form a pseudo-job $j^{\prime}$ with a chain precedence constraint.

\vspace{0.35em}\noindent\textbf{Learning the Optimal Parameters.} The online learning algorithm (TOLA) of Menache {\em et al.} \cite{Jain14} is adapted to learn the most cost-effective parametric policy.

Each job is associated with a particular parametric policy that is defined by a tuple of parameters $\{\beta, \beta_{0}, b\}$. 
%Now, we summarize the purposes of these parameters and their physical implications in cloud services.
%Each job $j$ relates to a tuple of parameters $\{\beta,$ $\beta_{0},$ $b\}$.
The parameter $\beta$ represents the availability of spot service while $\beta_{0}$ indicates the sufficiency of self-owned instances. 
%determines the allocation of self-owned instances; given the total amount of self-owned instances, it should be set to a value for ensuring that no jobs are overly allocated while still achieving a high utilization of self-owned instances.
When a job $j$ arrives, $\beta$ and $\beta_{0}$ are used to determine the deadline allocation via the lines 1-5 of Algorithm~\ref{Algo-instance-allocation-process}. The value of $\beta$ may only depend on the system dynamics, independent of the behavior of an individual user; this is the case of Google Cloud. Besides the system dynamics, it may also relate to the bid price $b$ of a user; this is the case of Amazon EC2 and Microsoft Azure. Then, a user needs to bid a price $b$ to request spot instances; its jobs fail to get instances when either $b$ is lower than the spot price at a moment or the system reclaims the allocated instances. In this case, we need to learn the best bid price $b$ against the spot price dynamics. In the case of Google Cloud, no bid is required and we simply set $b$ to a null value.

There is a set $\mathcal{P}$ of $n$ tuples $\{\beta, \beta_{0}, b\}$, each representing one policy. The high-level idea of TOLA is as follows. This is an initial probability distribution over the $n$ policies. Whenever a job $j$ arrives at time $t$, a policy is randomly chosen from $\mathcal{P}$ according to the distribution and it determines the actual allocation of instances to the job $j$ and the actual cost of completing $j$. On the other hand, given an arbitrary policy, the cost of completing an arbitrary job $j^{\prime}$ depends on the fixed on-demand price and the variable spot prices in $[a_{j^{\prime}}, d_{j^{\prime}}]$. At time $t$, for the past jobs whose deadlines are no larger than $t$, we can derive their costs under each policy of $\mathcal{P}$ since we know the spot prices in $[0, t]$. We can choose one of such jobs unexamined so far, and examine its cost under each policy; then the distribution is updated at time $t$ such that the lower-cost (higher-cost) polices of this job are re-assigned the enlarged ({\em resp.} reduced) probabilities.

As the time horizon expands, the probability distribution is updated over and over and the most cost-effective policies of $\mathcal{P}$ will be identified gradually, {\em i.e.}, the ones with the highest probabilities. In the meantime, as more and more jobs are processed, the actual cost of completing all jobs will be close to the cost of completing all jobs under the best policy of $\mathcal{P}$.

\section{Evaluation}
\label{sec.evaluation}

The main aim of our evaluations is to show the effectiveness of the proposed policies of this paper.

\subsection{Simulation Setups}
\label{sec.parameters}

{In alignment to best practices in prior art} \cite{Jain14,Nagarajan13a,Wu20a}, jobs are generated as follows. The on-demand price $p$ is normalized to be 1. The job arrival follows a poisson process with a mean of 4. The number $l$ of tasks in a job is randomly set to $7$ or $49$. The order of generating tasks is also the topological order of tasks in the graph. For any two tasks $i_{1}$ and $i_{2}$, a precedence constraint is associated with a probability 0.5. To ensure connectivity, for all $i\in [1, l-1]$, a task $i$ without successors is randomly connected to one of the latter tasks $i+1, \cdots, l$, as its successor; for all $i\in [2, l]$, a task $i$ without predecessors is randomly connected to one of its former tasks $1, \cdots, i-1$ as its predecesor. The parallelism bound of a task is randomly set to 8 and 64. The minimum execution time $e_{i}$ of every task $i$ follows a bounded Pareto distribution \cite{Chen11a} with a shape parameter $\epsilon = \frac{7}{8}$, a scale parameter $\sigma=\frac{7}{32}$ and a location parameter $\mu=\frac{1}{4}$; the maximum and minimum values of $x$ are set to 2 and 10. The task size $z_{i}$ is $e_{i}\cdot\delta_{i}$.

For each DAG job $j$, we compute its critical path and denote its length by $e_{j}^{c}$, which is the minimum execution time needed to finish $j$ \cite{Nagarajan13a}. The job's relative deadline $d_{j}-a_{j}$ is set to $x\cdot e_{j}^{c}$, where $x$ is uniformly distributed over $[1, x_{0}]$. $x$ represents jobs' flexibility and determines their capability to utilize spot instances; it is a main factor that determines the performance. In this paper, we consider four types of jobs with different levels of time flexibility, and {\em the 1st, 2nd, 3rd and 4th types of jobs} respectively have $x_{0} = 1.5, 2, 2.5, 3$. Each DAG job is transformed into a simpler job with chain-like precedence constraints, after which various policies are applied to the simplified job for processing. We can use an exponential distribution to model spot prices \cite{Zheng15}. Specifically, each unit of time is divided into 12 equal time slots, and spot prices are updated per slot; their values can follow a bounded exponential distribution where its mean is set to 0.13; the upper and lower bounds are set to 1 and 0.12.

\vspace{0.3em}\noindent\textbf{Proposed Policies.} The parametric policy $\{\beta_{0}, \beta, b\}$ is described in Section~\ref{sec.generalized-online-learning}. $\beta_{0}$, $\beta$ and $b$ are chosen respectively from $\mathcal{C}_{1}=$ $\left\{\frac{2}{12}, \frac{4}{14}, \frac{6}{16}, \frac{8}{18}, \frac{1}{2}, 0.6, 0.7\right\}$, $\mathcal{C}_{2} = \left\{1, \frac{1}{1.3}, \frac{1}{1.6}, \frac{1}{1.9}, \frac{1}{2.2}\right\}$, and $\mathcal{B}$ $= \{ 0.18, 0.21, 0.24, 0.27, 0.3\}$. When only spot and on-demand instances are considered, the set of policies is set to $$\boldsymbol{\mathcal{P}}=\{(\beta, b) \mid \beta\in \mathcal{C}_{2}, b\in\mathcal{B} \}.$$ When there are also self-owned instances, the set of policies is set to $$\boldsymbol{\mathcal{P}}=\{(\beta, b, \beta_{0}) \mid \beta_{0}\in\mathcal{C}_{1}, \beta\in \mathcal{C}_{2}, b\in\mathcal{B} \}.$$

\vspace{0.3em}\noindent\textbf{Benchmark Policies.} The benchmark policy is used as a baseline to measure the performance of the proposed policy. Our analysis in Proposition~\ref{proposi-spot} formalizes that an intuitive policy can achieve the expected optimal utilization of spot instances. For comparison, the benchmark policies include (\rmnum{1}) the naive policy for allocating the time windows in which tasks are executed and (\rmnum{2}) the naive policy for self-owned instances. We evaluate {\em two possible naive policies for time window allocation}, where the first can only be applied to spot and on-demand instances and the other will also be applied to self-owned instances:
\begin{description}
\item [\textbf{Greedy}] As the time horizon expands, a job simply bids for $\delta_{i}$ spot instances for each of its tasks until the length of the critical path for processing the remaining workload of tasks is no less than the remaining time window size; afterwards, we simply use $\delta_{i}$ on-demand instances for processing the remaining workload of each task $i$.

\vspace{0.16em}\item [\textbf{Even}] Upon arrival of a job, we specify a series of consecutive time windows in which its tasks are executed and finished. Each task $i$ has a time window size $\varsigma_{i}=e_{i}+x_{i}$. The remaining time $\omega=d_{j}-a_{j}-\sum_{i=1}^{l}{e_{i}}$ is evenly allocated among the $l$ tasks, and we set $x_{i}$ to $\omega/l$.
\end{description}
{\em The naive policy for self-owned instances} would be allocating as many self-owned instances as possible to each task in a first-come-first-served discipline, taking into account the number of self-owned instances available. Specifically, upon arrival of a job, if the time windows of tasks are specified, we allocate as many self-owned instances as possible to each task $i$ within its parallelism bound, {\em i.e.}, $$r_{i}=\min\{N(\varsigma_{i-1}, \varsigma_{i}),\, \delta_{i}\}.$$ The set of benchmark policies are parameterized and defined as $$\boldsymbol{\mathcal{P}^{\prime}} = \{ b \mid b\in\mathcal{B} \}.$$

\vspace{0.3em}\noindent\textbf{Performance Metric.} The objective of this paper is minimizing the cost of finishing a set of jobs $\mathcal{J}$ that arrive over time. Each job $j$ is processed under a proposed or benchmark policy, indexed by $\pi$. There are three types of jobs to be evaluated. Let $Z_{j}$ denote the total workload of job $j$ that consists of $l$ tasks, {\em i.e.}, $Z_{j}=\sum_{i=1}^{l}{z_{i}}$. {Let $c_{j}(\pi)$ denote the cost of completing $j$ under the policy $\pi$.} When there are $x_{1}$ self-owned instances and the $x_{2}$-th type of jobs are processed, {\em the average unit cost of processing jobs under a policy $\pi$}, denoted by $\alpha_{x_{1}, x_{2}}(\pi)$, is defined as the ratio of the total cost of utilizing various instances to the processed workload of jobs: $$\alpha_{x_{1}, x_{2}}(\pi) = \sum_{j\in\mathcal{J}}{c_{j}(\pi)}/\sum_{j\in\mathcal{J}}{Z_{j}}.$$

When a fixed policy $\pi$ is applied to all jobs, we use $\alpha_{x_{1}, x_{2}}$ ({\em resp.} $\alpha_{x_{1}, x_{2}}^{\prime}$) to denote the minimum of the average unit costs of our proposed policies ({\em resp.} the benchmark policies):
\begin{align*}
\alpha_{x_{1}, x_{2}}=\min\limits_{\pi\in\mathcal{P}}{\alpha_{x_{1}, x_{2}}(\pi)} \,\text{ and }\, \alpha_{x_{1}, x_{2}}^{\prime}=\min\limits_{\pi\in\mathcal{P}^{\prime}}{\alpha_{x_{1}, x_{2}}(\pi)}.
\end{align*}
%Our policy for spot and on-demand instances is expected to be optimal and the cost incurred by it should be a lower bound of the cost of other policies.
To measure the effectiveness of our proposed policies over the benchmark policies, we define a metric, called {\em cost improvement}, as follows:
\begin{center}
$\rho_{x_{1}, x_{2}} = 1 - \frac{\alpha_{x_{1}, x_{2}}}{\alpha_{x_{1}, x_{2}}^{\prime}}$;
\end{center}
$\rho_{x_{1}, x_{2}}$ represents how much cost is saved by using our proposed policies, compared with the benchmark policies. For example, when $\rho_{x_{1}, x_{2}}=0.5$, the cost of our proposed policies is only half the cost of the benchmark policies.

Furthermore, in this paper, the policies of a set are associated with a probability distribution on which we base the selection of a policy for each arriving job. The online learning algorithm TOLA ({\em i.e.}, Algorithm~\ref{Regret} in Appendix~\ref{sec.learning}) is run to update the distribution, finally identifying the policy that generates the lowest cost. When TOLA is applied, we use $\overline{\alpha}_{x_{1}, x_{2}}(\mathcal{P})$ ({\em resp.} $\overline{\alpha}_{x_{1}, x_{2}}(\mathcal{P}^{\prime})$) to denote the average unit cost of processing all jobs if the set of policies is $\mathcal{P}$ ({\em resp.} $\mathcal{P}^{\prime}$), and the cost improvement is defined as follows:
\begin{center}
$\overline{\rho}_{x_{1}, x_{2}} = 1 - \frac{\overline{\alpha}_{x_{1}, x_{2}}(\mathcal{P})}{\overline{\alpha}_{x_{1}, x_{2}}(\mathcal{P}^{\prime})}$.
\end{center}
$\overline{\rho}_{x_{1}, x_{2}}$ represents the cost saving when online learning is applied.
%Similarly, $\overline{\alpha}_{x_{1}, x_{2}}(\mathcal{P}^{\prime})-\overline{\alpha}_{x_{1}, x_{2}}(\mathcal{P})$ is the extra cost incurred when the benchmark policies are used. $\overline{\rho}_{x_{1}, x_{2}}$ represents the ratio of the extra cost incurred by the benchmark policies to the cost of our proposed policies, when online learning is applied.

\subsection{Results}
\label{sec.evaluation-results}

Our simulations are run over about 10000 jobs. We will show the cost improvement of our proposed policies over the benchmark policies.

\vspace{0.25em}\noindent\textbf{Experiment 1.} We evaluate the effectiveness of the proposed deadline allocation algorithm ({\em i.e.}, Algorithm~\ref{Algo-deadline-allocation-basic}) in the case that a user {does not have any} self-owned instances ({\em i.e.}, $x_{1}=0$) and only utilizes spot and on-demand instances. This algorithm is compared with the greedy and even policies in Section~\ref{sec.parameters}. The corresponding results are listed in Table~\ref{table-spot-optimal}.
%The superiority represents the percentage of the extra cost incurred by the benchmark policies, compared with the proposed one.
The cost improvement of our algorithm is significant and ranges from 15.23\% to 27.10\%. The improvement is especially strong when the population of jobs has a tight time flexibility to be finished, {\em e.g.}, the cost improvement can be up to 27.10\%. Since our proposed policy is expected to be optimal, we can see in all cases that the cost of our policy is a lower bound of the cost of the other policies.

\begin{table}[t]
	\centering
		\caption{Cost Improvement for Spot and On-Demand Instances}
	\begin{threeparttable}[b]

		\begin{tabular}{|C{1.3cm}|C{1.2cm}| C{1.2cm} | C{1.2cm} | C{1.2cm} |}   %p{1.1cm}|p{3.8cm}|}
			
			\hline
			           & $\rho_{0, 1}$   & $\rho_{0, 2}$   & $\rho_{0,3}$  & $\rho_{0,4}$  \\ \hline	
     \textbf{Greedy}   &   27.10\%       &     20.90\%     &    16.53\%    &  15.23\%   \\ \hline	
     \textbf{Even}     &   25.61\%       &     22.20\%     &    18.03\%    &  16.39\%   \\ \hline	

		\end{tabular}
	\end{threeparttable}
		\label{table-spot-optimal}
\end{table}

\vspace{0.25em}\noindent\textbf{Experiment 2.} We consider the case that a user also has some self-owned instances. In our proposed framework ({\em i.e.,} Algorithm~\ref{Algo-instance-allocation-process}), there are policies for allocating deadlines and self-owned instances. We evaluate the overall effectiveness of this framework. The benchmark policies for comparison include the even policy for allocating deadlines and the naive policy for allocating self-owned instances. The corresponding results are listed in Table~\ref{table-self-owned}. The cost improvement is significant and ranges from 37.22\% to 62.73\%. {As a user has more self-owned instances, less spot and on-demand instances will be consumed to complete all the jobs. The more self-owned instances a user has, the larger their effect on the cost.} With our proposed policies, the cost improvement increases as the number of self-owned instances increases from 300 to 1200.

%the extra cost incurred by the benchmark policies is at least half the cost of our proposed policy and up to 1.5 times the cost of our proposed policies.

\begin{table}[t]
	\centering
		\caption{Overall Cost Improvement with Self-Owned Instances}
	\begin{threeparttable}[b]
		\begin{tabular}{|C{0.8cm}| C{1.06cm} |  C{1.06cm} |  C{1.06cm} |  C{1.06cm} |}
			\hline
\diagbox{$x_{1}$}{$\rho$}{$x_{2}$}  &     $1$     &   $2$      &    $3$      & $4$         \\ \hline
                           $300$    &   37.22\%   &  41.28\%   &   39.57\%   & 37.26\%     \\ \hline	
                           $600$    &   43.60\%   &  51.43\%   &   50.05\%   & 45.79\%     \\ \hline
                           $900$    &   50.57\%   &  58.80\%   &   55.06\%   & 50.81\%     \\ \hline
                           $1200$   &   57.95\%   &  62.73\%   &   58.57\%   & 55.24\%     \\ \hline
		\end{tabular}
	\end{threeparttable}
	\label{table-self-owned}
\end{table}

\begin{table}[t]
	\centering
		\caption{Cost Improvement for Self-Owned Instances}
	\begin{threeparttable}[b]
		\begin{tabular}{|C{0.8cm}| C{1.06cm} |  C{1.06cm} |  C{1.06cm} |  C{1.06cm} |}
			\hline
\diagbox{$x_{1}$}{$\rho$}{$x_{2}$}  &     $1$     &   $2$      &    $3$      & $4$         \\ \hline
                           $300$    &   13.16\%   &  18.30\%   &   20.14\%   & 20.51\%     \\ \hline	
                           $600$    &   21.25\%   &  31.97\%   &   33.74\%   & 31.00\%     \\ \hline
                           $900$    &   30.64\%   &  42.14\%   &   40.13\%   & 37.04\%     \\ \hline
                           $1200$   &   40.68\%   &  47.37\%   &   44.60\%   & 42.50\%     \\ \hline
		\end{tabular}
	\end{threeparttable}
	\label{table-self-owned-1}
\end{table}

\begin{table}[!ht]
	\centering
		\caption{Utilization Ratio $\mu_{x_{1}, x_{2}}$ for Self-Owned Instances}
	\begin{threeparttable}[b]
		\begin{tabular}{|C{0.8cm}| C{1.06cm} |  C{1.06cm} |  C{1.06cm} |  C{1.06cm} |}
			\hline
\diagbox{$x_{1}$}{$\mu$}{$x_{2}$}  &     $1$     &   $2$      &    $3$      & $4$         \\ \hline
                           $300$    &   97.01\%   &  93.02\%   &   88.46\%   & 81.68\%     \\ \hline	
                           $600$    &   96.40\%   &  90.58\%   &   79.95\%   & 73.98\%     \\ \hline
                           $900$    &   95.81\%   &  82.13\%   &   73.08\%   & 76.64\%     \\ \hline
                           $1200$   &   94.58\%   &  78.59\%   &   79.28\%   & 74.00\%     \\ \hline
		\end{tabular}
	\end{threeparttable}
	\label{table-utilization-comparison}
\end{table}

\vspace{0.25em}\noindent\textbf{Experiment 3.} We still consider the case with all the three types of instances, like Experiment 2. We evaluate the effectiveness of the proposed policy (\ref{policy-self-owned}) for allocating self-owned instances, by comparison with the benchmark policy for self-owned instances. While evaluating these two policies, the same deadline allocation algorithm ({\em i.e.}, lines 1-5 of Algorithm~\ref{Algo-instance-allocation-process}) is used. The corresponding results are listed in Table~\ref{table-self-owned-1}. The cost improvement is significant and ranges from 13.16\% to 47.37\%. Given the job type $x_{2}$, the cost improvement increases with the amount of self-owned instances that a user possesses.

On the other hand, we are also interested in the utilization of self-owned instances and show the ratio of the utilization of our proposed policy to the utilization of the benchmark policy. We use $\mu_{x_{1}, x_{2}}$ to denote this ratio under the $x_{2}$-th type of jobs when there are $x_{1}$ self-owned instances. The corresponding results are listed in Table~\ref{table-utilization-comparison}. For example, when $x_{1}=900$ and $x_{2}=1$, the utilization ratio $\mu_{x_{1}, x_{2}}$ is 0.9581. Overall, the benchmark policy can achieve a higher utilization of self-owned instances. It allocates as many self-owned instances as possible to the tasks of each job. However, jobs are differentiated by their capability to utilize spot instances. We should allocate more self-owned instances to the jobs that have poor capability to utilize spot instances, which can effectively reduce the unnecessary consumption of costly on-demand instances. This leads to that, although our proposed policy achieves a lower utilization, it can still achieve a lower cost than the benchmark policy. For example, when $x_{1}=900$ and $x_{2}=1$, the cost improvement $\rho_{x_{1}, x_{2}}$ is 30.64\%.

\vspace{0.25em}\noindent\textbf{Experiment 4.} Finally, we show the performance when the online learning algorithm TOLA, is applied. When only spot and on-demand instances are considered, the experimental setting here is the same as Experiment 1. When self-owned instances are also considered, the experimental setting is the same as Experiment 2. We list in Table~\ref{table-regret} the values of $\rho_{x_{1}, x_{2}}$ in the case that the job type $x_{2}$ is $2$ and the number $x_{1}$ of self-owned instances is 0, 300, 600, 900 and 1200 respectively. The results still show a significant cost improvement, ranging from 24.87\% to 59.05\%.

\begin{table}[t]
	\centering
		\caption{Cost Improvement under Online Learning}
	\begin{threeparttable}[b]
		\begin{tabular}{| C{1.2cm} | C{1.2cm} |  C{1.2cm} |  C{1.2cm} |  C{1.2cm} |}
			\hline
		$\overline{\rho}_{0,2}$    &  $\overline{\rho}_{300,2}$   & $\overline{\rho}_{600,2}$   & $\overline{\rho}_{900, 2}$   & $\overline{\rho}_{1200, 2}$  \\ \hline
     24.87\%     &   36.91\%    &  47.26\%  &  54.71\%  &  59.05\%   \\ \hline	

		\end{tabular}
	\end{threeparttable}
		\label{table-regret}
\end{table}

\section{Conclusion}
\label{sec.conclusion}

The formation of cost-effectively using IaaS clouds opens the door for users to participate in cloud ecosystems. We consider DAG jobs that are an important extension to the independent tasks considered in the previous works. A job has a specific deadline and consists of multiple tasks with precedence constraints. Driven by the goal of maximizing the utilization of self-owned and spot instances, we identify that a key question is allocating deadlines to tasks. Thus, given the policies for allocating instances, we qualitatively characterize the capability that each task has to utilize spot instances in a predefined time window. Based on this, we formulate the deadline allocation problem as an integer program and derive {in a computationally efficient fashion the optimal solution}. These policies and algorithms are parametric and thus adaptive. Facing the dynamic of cloud market, we can leverage online learning to infer their optimal values. Several intuitive heuristics are used as baselines to validate the cost improvement brought by the proposed solutions. The cost improvement is up to 24.87\% when spot and on-demand instances are considered and up to 59.05\% when self-owned instances are considered.

% use section* for acknowledgment
%\section*{Acknowledgment}

%\ifCLASSOPTIONcompsoc
  % The Computer Society usually uses the plural form
%  \section*{Acknowledgments}
%\else
  % regular IEEE prefers the singular form
%  \section*{Acknowledgment}
%\fi

%This research is supported, in part, by NTU-WeBank JRI (NWJ-2020-008); the RIE 2020 Advanced Manufacturing and Engineering (AME) Programmatic Fund (No. A20G8b0102), Singapore; and Nanyang Technological University, Nanyang Assistant Professorship (NAP).

% You can push biographies down or up by placing
% a \vfill before or after them. The appropriate
% use of \vfill depends on what kind of text is
% on the last page and whether or not the columns
% are being equalized.

%\vfill

% Can be used to pull up biographies so that the bottom of the last one
% is flush with the other column.
%\enlargethispage{-5in}

\appendices

\section{}

\subsection{Proof of Proposition~\ref{proposi-spot}}
\label{sec.spot}

%\begin{proof}[\textbf{Proof of Proposition~\ref{proposi-spot}}]
By Definition~\ref{def-allocation-process}, if $\hat{\varsigma}_{i} > e_{i}$, we have that either there will be a turning point $\varsigma_{i}^{c}$ larger than $\varsigma_{i-1}$ or such a turning point does not exist, since (\ref{equa-flexibility}) holds at time $t=\varsigma_{i-1}$. Thus, from time $\varsigma_{i-1}$ on, $o_{i}$ on-demand instances and $s_{i}$ spot instances are requested. The duration of utilizing spot instances is denoted by $\tau_{i}$ where $\tau_{i}\in (0, \hat{\varsigma}_{i}]$. The amount of workload processed by spot instances is $s_{i}\cdot\tau_{i}\cdot \beta$ and it achieves the maximum possible value when $s_{i}=\delta_{i}$ and $\tau_{i}=\hat{\varsigma}_{i}$. Thus, a necessary condition under which task $i$ can be finished by simply utilizing spot instances is $\delta_{i}\cdot \hat{\varsigma}_{i}\cdot \beta\geq z_{i}$, {\em i.e.}, the condition (\ref{condition-1}). Under this condition, the corresponding optimal strategy is to request $\delta_{i}$ spot instances where $s_{i}=\delta_{i}$ and $o_{i}=0$ and it is expected that the turning point does not exist.

If $\hat{\varsigma}_{i} \in \left(e_{i}, \frac{e_{i}}{\beta}\right)$, there will be a turning point $\varsigma_{i}^{c}$ larger than $\varsigma_{i-1}$ where $\tau_{i}=\varsigma_{i}^{c}-\varsigma_{i-1}$ and the instance allocation process has two phases by Definition~\ref{def-allocation-process}; then, we have
\begin{align}\label{equa-workload-second-phase}
\tau_{i}\cdot s_{i}\cdot \beta + o_{i}\cdot \tau_{i} + \delta_{i}\cdot (\hat{\varsigma}_{i}-\tau_{i})=z_{i}.
\end{align}
By (\ref{equa-parallelism-no-self-owned}) and (\ref{equa-workload-second-phase}), we can derive that the workload processed by spot instances is
\begin{align}\label{equa-spot-workload}
\tau_{i}\cdot s_{i}\cdot \beta=\frac{\beta}{1-\beta}\cdot (\delta_{i}\cdot \hat{\varsigma}_{i}-z_{i}).
\end{align}
The right side of Equation (\ref{equa-spot-workload}) is independent of the values of $s_{i}$ and $\tau_{i}$; here, $\delta_{i}$, $\hat{\varsigma}_{i}$, and $z_{i}$ are known. Thus, if $\hat{\varsigma}_{i} \in \left(e_{i}, \frac{e_{i}}{\beta}\right)$, the optimal strategy can be to request $\delta_{i}$ spot instances in the first phase and $\delta_{i}$ on-demand instances in the second phase.

Finally, if $\hat{\varsigma}_{i} = e_{i}$, we have that the turning point exists and $\varsigma_{i}^{c}=\varsigma_{i-1}$. By Definition~\ref{def-allocation-process}, we have $o_{i}=\delta_{i}$ and $s_{i}=0$.

\subsection{Proof of Proposition~\ref{proposi-spot-workload}}
\label{sec.spot-workload}

With different time window size, it is expected that task $i$ is finished by different composition of instances. We have the following observations by Proposition~\ref{proposi-spot}. If $\hat{\varsigma}_{i}=e_{i}$, no workload is processed by spot instances and $z_{i}^{o}=0$. If $\hat{\varsigma}_{i}\in \left(e_{i}, \frac{e_{i}}{\beta} \right)$, the workload processed by spot instances is defined in (\ref{equa-spot-workload}) and we have by (\ref{equa-minimum-e-time}) and (\ref{equa-window-structure}) that $z_{i}^{o}=\frac{\beta}{1-\beta}\cdot \delta_{i}\cdot x_{i}$. If $\hat{\varsigma}_{i} \geq \frac{e_{i}}{\beta}$, it is expected that task $i$ is completed by utilizing spot instances alone and the workload processed by spot instances is $z_{i}$. Specifically, when $\hat{\varsigma}_{i} = \frac{e_{i}}{\beta}$, we have by (\ref{equa-minimum-e-time}) and (\ref{equa-window-structure}) that $z_{i}^{o}=z_{i}=\frac{\beta}{1-\beta}\cdot \delta_{i}\cdot x_{i}$. The proposition thus holds.

\subsection{Proof of Proposition~\ref{proposi-optimal-deadline-allocation}}
\label{sec.optimal-deadline-allocation}

We will prove by contradiction that any solution different from the one of Algorithm~\ref{Algo-deadline-allocation-basic} cannot lead to that more workload is processed by spot instances. Suppose that in an optimal solution there exists a task $i_{k^{\prime}}$, where $k^{\prime}\leq l^{\ast}$, whose time window size is smaller than $\hat{\varsigma}_{i_{k^{\prime}}}^{\ast}$. With abuse of notation, let $k^{\prime}$ be the maximum such integer in $[1, l^{\ast}]$. We can find some latter tasks $i$, where $i> l^{\ast}$, whose time window sizes are larger than $\hat{\varsigma}_{i}^{\ast}$. Then, we reduce the allocated time window sizes of these tasks by $\hat{\varsigma}_{i_{k^{\prime}}}^{\ast}-\hat{\varsigma}_{i_{k^{\prime}}}$ while keeping their sizes $\geq e_{i}$; correspondingly, we increase the time window size $\hat{\varsigma}_{i_{k^{\prime}}}$ to $\hat{\varsigma}_{i_{k^{\prime}}}^{\ast}$. By Proposition~\ref{proposi-spot-workload}, this will lead to that the workload processed by spot instances is the same as or larger than the workload before the re-allocation of time window sizes, since the parallelism bound of task $i_{k^{\prime}}$ is no smaller than the latter ones. Thus, the solution produced by Algorithm~\ref{Algo-deadline-allocation-basic} can lead to that the most workload is processed by spot instances.

\subsection{Proof of Proposition~\ref{property-self-owned}}
\label{sec.property-self-owned}

We also have $f(x)=\max  \left\{  \delta_{i} - \frac{\delta_{i}\cdot \hat{\varsigma}_{i}-z_{i}}{\hat{\varsigma}_{i}\cdot (1-x)},\,\, 0 \right\}$. Since $\delta_{i}\cdot \hat{\varsigma}_{i}-z_{i}\geq 0$, the proposition's second point holds. Next, we prove the first point. Suppose task $i$ is allocated $r_{i}$ self-owned instances. In order to ensure that the remaining workload $z_{i}-r_{i}\cdot \hat{\varsigma}_{i}$ of task $i$ can be finished by totally utilizing spot instances, we have $\beta\cdot (\delta_{i}-r_{i}) \cdot \hat{\varsigma}_{i} \geq z_{i}-r_{i}\cdot \hat{\varsigma}_{i}$. We further have
\begin{align*}
r_{i} \geq \frac{z_{i}-\delta_{i}\cdot \hat{\varsigma}_{i}\cdot \beta}{\hat{\varsigma}_{i}\cdot (1-\beta)}.
\end{align*}
Since $r_{i}\geq 0$, the proposition's first point holds.

\subsection{Proof of Proposition~\ref{proposi-workload-self-owned}}
\label{sec.workload-self-owned}

{\em First}, we study the case that $\beta_{0}\leq \beta$ and task $i$ is allocated $r_{i} = f(\beta_{0})$ self-owned instances; then, task $i$ is expected to be finished by utilizing self-owned and spot instances alone by Proposition~\ref{property-self-owned}, where $f(\beta_{0})\geq f(\beta)$. If $\hat{\varsigma}_{i} \geq \frac{e_{i}}{\beta_{0}}$, we have $r_{i}=0$ since $f(\beta_{0})=0$ by (\ref{equa-f-x}); then, $z_{i}^{o} = z_{i}$. If $\hat{\varsigma}_{i} \in \left[e_{i},  \frac{e_{i}}{\beta_{0}}\right]$, we have $r_{i}=$ $f(\beta_{0})=\frac{z_{i}-\delta_{i}\cdot \hat{\varsigma}_{i}\cdot \beta_{0}}{\hat{\varsigma}_{i}\cdot (1-\beta_{0})} \geq 0$; then, the workload processed by spot instances is
\begin{align}
z_{i}^{o} = z_{i} - r_{i}\cdot \hat{\varsigma}_{i} = \frac{\beta_{0}}{1-\beta_{0}}\cdot \delta_{i}\cdot x_{i}.
\end{align}
Thus, the proposition's first point holds.

{\em Second}, we study the case that $\beta_{0}> \beta$. If $\hat{\varsigma}_{i} \geq \frac{e_{i}}{\beta}$, we have by Proposition \ref{proposi-spot} that task $i$ is expected to be finished by only requesting spot instances in $[\varsigma_{i-1}, \varsigma_{i}]$, without self-owned and on-demand instances and that $f(\beta) = 0$. Thus, $f(\beta_{0}) = 0$ since $0\leq$ $f(\beta_{0})$ $\leq f(\beta)$ by Proposition~\ref{property-self-owned}. As a result, we have $z_{i}^{o} = z_{i}$. If $\hat{\varsigma}_{i}$ $\in \left[e_{i},  \frac{e_{i}}{\beta}\right)$, we have the following observation where $f(\beta) > 0$ by (\ref{equa-f-x}). Task $i$ is allocated $f(\beta_{0})$ self-owned instances where $0\leq f(\beta_{0})< f(\beta)$; then, task $i$ is expected to utilize some on-demand instances to be finished by its deadline. Suppose that the duration for which task $i$ can utilize spot instances is $\tau_{i}$ where $\tau_{i}\in [0, \hat{\varsigma}_{i})$. The workload of task $i$ is processed respectively by $r_{i}$ self-owned instances for a duration $\hat{\varsigma}_{i}$, $\delta_{i}-r_{i}$ spot instances for a duration $\tau_{i}$, and $\delta_{i}-r_{i}$ on-demand instances for a duration $\hat{\varsigma}_{i}-\tau_{i}$, and have
\begin{align*}
z_{i} - r_{i}\cdot \hat{\varsigma}_{i} = \beta\cdot\tau_{i}\cdot(\delta_{i}-r_{i}) + (\hat{\varsigma}_{i}-\tau_{i})\cdot(\delta_{i}-r_{i}).
\end{align*}
Further, we can derive $\tau_{i}=\frac{\delta_{i}\cdot\hat{\varsigma}_{i}-z_{i}}{(1-\beta)\cdot(\delta_{i}-r_{i})}$. The workload processed by spot instances is
\begin{align}
z_{i}^{o} = (\delta_{i}-r_{i})\cdot\tau_{i}\cdot\beta = \frac{\beta}{1-\beta}\cdot\delta_{i}\cdot x_{i}.
\end{align}
Here, when $\hat{\varsigma}_{i} = e_{i}+x_{i} = \frac{e_{i}}{\beta}$, we have $z_{i}^{o}=z_{i}=\frac{\beta}{1-\beta}\cdot\delta_{i}\cdot x_{i}$. This completes the proof of the second point.

\section{}
\label{append-description}

\subsection{Job Transformation}

%We apply the technique of Nagarajan {\em et al.} \cite{Nagarajan13a} to map a DAG-structured job $j$ to another virtual job $j^{\prime}$ with a simpler chain precedence constraint.

In this subsection, we describe the technique of Nagarajan {\em et al.} \cite{Nagarajan13a} formally. Consider a DAG job $j$ of $l$ tasks where each task $i$ has a size $z_{i}$ and a parallelism bound $\delta_{i}$. We first give a pseudo-schedule on which we base the construction of the corresponding virtual pseudo-job:
\begin{itemize}
\item Allocate each task $i$ the maximum number $\delta_{i}$ of instances and execute $i$ as early as possible; then, we can get the earliest possible start time $q_{i}$ for executing $i$ such that $q_{i}$ $\geq q_{i^{\prime}}+ e_{i^{\prime}}$ for all $i^{\prime} \prec i$.
\item Run task $i$ on $\delta_{i}$ instances in the time window $[q_{i},$ $q_{i} + e_{i}]$.
\end{itemize}
The pseudo-schedule processes the workload of tasks with their precedence constraints respected.

The arrival time of job $j$ is $a_{j}$. Let $T_{j}=\max_{i=1}^{l}\{q_{i} + e_{i}\}$ denote the completion time of $j$ in the pseudo-schedule. For each task $i$ of $j$, the pseudo-schedule defines a specific time window in which the task $i$ runs on $\delta_{i}$ instances. The pseudo-job is constructed as follows:
\begin{itemize}
\item Partition the interval $[a_{j}, T_{j}]$ into the minimum number of sub-intervals $I_{1}$, $I_{2}$, $\cdots$, $I_{l^{\prime}}$ such that if a task runs in a sub-interval, it will run continuously in the entire sub-interval.

\item For $k\in [1, l^{\prime}]$, let $r_{k}$ denotes the total number of instances allocated to all tasks running in $I_{k}$; the pseudo-task is defined to have a parallelism bound $\delta(k)=r_{k}$ and a size $z(k)=r_{k}\cdot |I_{k}|$, which is the total workload processed by the pseudo-schedule during $I_{k}$.

\item Each sub-interval $I_{k}$ corresponds to a pseudo-task; we enforce the chain precedence constraints $1\prec 2\prec \cdots\prec l^{\prime}$ on the $l^{\prime}$ pseudo-tasks.
\end{itemize}
%Note that a pseudo-task consists of portions of workload from multiple tasks.
These $l^{\prime}$ pseudo-tasks constitute a pseudo-job $j^{\prime}$.

A general DAG job $j$ is transformed to a pseudo-job $j^{\prime}$ with a chain precedence constraint. The transforming process is denoted by
\begin{align}\label{process-transform}
j^{\prime} \leftarrow \text{transform}(j).
\end{align}
A feasible schedule of the pseudo job $j^{\prime}$ (possibly different from the pseudo-schedule) is also a feasible schedule of the job $j$, which defines the order of executing different parts of the $l$ tasks and still respects the precedence constraints of tasks.

\begin{algorithm}[t]
%\DontPrintSemicolon
\SetKwInOut{Input}{Input}
\SetKwInOut{Output}{Output}

\tcc{\footnotesize{transform a DAG job $j^{\prime}$ to a job $j$ with chain-like precedence constraints}}

\If{Job $j^{\prime}$ is not a job with chain-like precedence constraints}{

    $j\leftarrow$ \text{transform}($j^{\prime}$)\;
}\Else{

    $j\leftarrow j^{\prime}$\;
}

\caption{Job Structure Simplifying {\cite{Nagarajan13a}}}\label{Algo-Job-Structure-Simplifying}
\end{algorithm}

\subsection{The Online Learning Algorithm}
\label{sec.learning}

\begin{algorithm*}[t]
	%\DontPrintSemicolon
	\SetKwInOut{Input}{Input}
	\SetKwInOut{Output}{Output}
	
	\Input{a set $\mathcal{P}$ of $n$ policies, each indexed by $\pi\in\{1, 2,$ $\cdots,$ $n\}$; the set $\mathcal{J}_{t}$ of jobs that arrive at $t$; }

	\BlankLine

    \tcc{\footnotesize{as the time horizon expands, the algorithm operates as follows at every moment $t$}}
	
    \If{$t=0$}{

        $\kappa\leftarrow 1$\tcp*{\footnotesize{$\kappa$ is used to track the number of times updating the weight distribution}}

	    initialize the weight vector of policies: $w_{\kappa}=\{w_{\kappa, 1}, \cdots, w_{\kappa, n}\}=\left\{\frac{1}{n}, \cdots, \frac{1}{n}\right\}$\;
    }

            $\mathcal{J}_{t}^{\prime}\leftarrow \mathcal{J}_{t}$\;

            \While{$\mathcal{J}_{t}^{\prime}\neq \emptyset$}{

			    Get a job $j$ from $\mathcal{J}_{t}^{\prime}$\;

                Call Algorithm~\ref{Algo-Job-Structure-Simplifying} and we get a job $j^{\prime}$ with a chain precedence constraint\;

			    Pick a policy $\pi_{j}=\pi$ with a probability $w_{\kappa,\pi}$\;

                Apply the parametric policy $\pi_{j}$ to $j^{\prime}$ and execute the job $j^{\prime}$ by Algorithm~\ref{Algo-instance-allocation-process}\;

                $\mathcal{J}_{t}^{\prime}\leftarrow \mathcal{J}_{t}^{\prime}-\{j\}$\;

            }

	        	\If{$t > d$}{
		
                   $\mathcal{J}_{t-d}^{\prime}\leftarrow \mathcal{J}_{t-d}$\;

			       \While{$\mathcal{J}_{t-d}^{\prime}\neq \emptyset$}{
				
				        Get a job $j$ from $\mathcal{J}_{t-d}^{\prime}$\;

                        Compute the cost of completing $j$ in the period $[a_{j}, d_{j}]$ under every policy $\pi\in\mathcal{P}$, denoted by $c_{j}(\pi)$\;

				       $\eta_{t}\leftarrow \sqrt{\frac{2\log{n}}{d(t-d)}}$\;
				
			        	\For{$\pi \leftarrow 1$ \KwTo $n$}{

		        			$w_{\kappa+1,\pi}^{\prime}\leftarrow w_{\kappa,\pi}\exp^{-\eta_{t}c_{j}(\pi)}$\;
				        }
				
        				\For{$\pi \leftarrow 1$ \KwTo $n$}{
		        			$w_{\kappa+1,\pi}\leftarrow \frac{w_{\kappa+1,\pi}^{\prime}}{\sum_{i=1}^{n}{w_{\kappa+1, i}^{\prime}}}$\;
				        }

                       $\kappa\leftarrow \kappa+1$\;
				
				       $\mathcal{J}_{t-d}^{\prime}\leftarrow \mathcal{J}_{t-d}^{\prime}-\{j\}$\;

		         	}
	 	}
	\caption{OptiLearning\label{Regret}}
\end{algorithm*}

In this subsection, we describe TOLA formally. There is a set of $n$ parametric policies $\mathcal{P}$. Each policy is represented as $\{\beta_{0},$ $\beta$, $b\}$ and indexed by $\pi=1, 2, \cdots$. Jobs arrive over time and constitute a set of jobs $\mathcal{J}$, indexed by $j=1, 2, \cdots$. Let $d$ $=$ $\max_{j\in\mathcal{J}}\{d_{j}-a_{j}\}$, {\em i.e.}, the maximum relative deadline of all jobs. Let $\mathcal{J}_{t}\subseteq \mathcal{J}$ denote all jobs $j$ arriving at time $t$, {\em i.e.}, $a_{j}$ $=$ $t$. If $\mathcal{J}_{t}\neq \emptyset$, there are jobs arriving at time $t$. There is a weight distribution $w$ over the $n$ policies whose initial value is $\{1/n, \cdots, 1/n\}$. At every moment $t$, TOLA operates as follows to determine the allocation of $\mathcal{J}_{t}$ and update the weight distribution $w$:
\begin{itemize}
\item If $\mathcal{J}_{t}\neq \emptyset$, it randomly picks for each job $j\in\mathcal{J}_{t}$ a policy $\pi_{j}$ from $\mathcal{P}$ according to the current distribution $w$ and bases the allocation of instances to $j$ on the policy $\pi_{j}$; the resulting cost of completing $j$ is denoted by $c_{j}(\pi_{j})$ (lines 4-10).
\item The distribution $w$ is updated when $t\geq d$ and $\mathcal{J}_{t-d}\neq \emptyset$. At such moment $t$, we have the knowledge of spot prices in $[t-d, t]$ and can derive the cost of completing each job $j^{\prime}\in\mathcal{J}_{t-d}$ under every policy $\pi\in\mathcal{P}$, denoted by $c_{j}(\pi)$ (lines 14-15); the distribution is updated such that the lower-cost ({\em resp.} higher-cost) polices of this job are re-assigned the enlarged ({\em resp.} reduced) weights (lines 16-20).
\end{itemize}
Thus, as time goes by and more and more jobs are processed, the most cost-effective policies of $\mathcal{P}$ will be identified gradually, {\em i.e.}, the ones with the highest weights.

As $t$ becomes larger and larger, TOLA will choose for every arriving job $j$ the most cost-effective policy $\pi_{j}$ at a high probability. Finally, the actual total cost of completing all jobs is close to the cost of completing all jobs under a specific policy $\pi^{*}\in\mathcal{P}$ that generates the lowest total cost, {\em i.e.},
\begin{align*}
\pi^{*} = \argmin\limits_{\pi\in \mathcal{P}}\left\{\sum\limits_{t\in [d, T]}\sum\limits_{j\in\mathcal{J}_{t}}{c_{j}(\pi)}\right\}.
\end{align*}
This is formalized as the following proposition. Let $N^{\prime}=|\cup_{t=d}^{T}{\mathcal{J}_{t}}|$, {\em i.e.}, the number of all jobs that arrive in $[d, T]$, and, as proved in \cite{Jain14}, we have that
\begin{proposition}[{{\cite[Theorem 1]{Jain14}}}]
\label{proposi-online-learning}
For all $\delta\in (0, 1)$, it holds with a probability at least $1-\delta$ over the random of online learning that
%\begin{equation*}
\begin{center}
$\frac{\sum\limits_{t\in [d, T]}\sum\limits_{j\in\mathcal{J}_{t}}{c_{j}(\pi_{j})}-\sum\limits_{t\in [d, T]}\sum\limits_{j\in\mathcal{J}_{t}}{c_{j}(\pi^{\ast})}}{N^{\prime}} \leq 9\sqrt{\frac{2d\log{(n/\delta)}}{N^{\prime}}}$.
\end{center}
\end{proposition}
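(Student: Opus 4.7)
The plan is to treat this bound as a regret guarantee for an exponential-weights (Hedge) algorithm with delayed feedback and bounded losses, then promote the expected-regret bound to a high-probability statement via a concentration inequality. The costs $c_j(\pi)$ are bounded, since spot prices lie in a bounded interval, on-demand prices are fixed, and deadlines are bounded by $d$; let $L$ denote a uniform bound on $c_j(\cdot)$ absorbed into the constants. The decision rule is sampling $\pi_j$ from the weight distribution $w_\kappa$ maintained by multiplicative updates with rate $\eta_t$, which is exactly the setting of the classical Hedge analysis.

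First I would establish the in-expectation regret bound. Using the standard potential-function argument with $\Phi_\kappa = \log \sum_\pi w_{\kappa,\pi}$ and the inequality $e^{-\eta x} \le 1 - \eta x + \eta^2 x^2/2$ on bounded losses, one derives, for any comparator $\pi^\ast$,
\begin{equation*}
\sum_{\kappa=1}^{N^\prime} \mathbb{E}_{\pi_\kappa \sim w_\kappa}[c_\kappa(\pi_\kappa)] - \sum_{\kappa=1}^{N^\prime} c_\kappa(\pi^\ast) \;\le\; \frac{\log n}{\eta} + \frac{\eta}{2} \sum_{\kappa=1}^{N^\prime} c_\kappa(\pi^\ast)^2,
\end{equation*}
with the time-varying $\eta_t = \sqrt{2\log n / (d(t-d))}$ handled by the usual doubling/anytime argument. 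The main subtlety, and the step I expect to be the hardest, is the delay. Feedback for a job arriving at $t$ is only processed at $t+d$, so when the algorithm acts on a job it may be using a weight vector based on information up to $d$ rounds old. The standard workaround is to decompose the job stream into $d$ interleaved subsequences on which Hedge sees undelayed feedback, apply the undelayed bound to each, and recombine; this accounts for the extra factor $d$ inside the square root and yields an expected per-job regret of order $\sqrt{d \log n / N^\prime}$.

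Next I would convert this expected-regret bound into a high-probability bound. Define the martingale difference $X_j = c_j(\pi_j) - \mathbb{E}[c_j(\pi_j)\mid \mathcal{F}_{j-1}]$, where $\mathcal{F}_{j-1}$ is the history prior to drawing $\pi_j$. Since $|X_j|$ is uniformly bounded, Azuma--Hoeffding gives
\begin{equation*}
\Pr\!\left[\sum_{j=1}^{N^\prime} X_j \ge c \sqrt{N^\prime \log(1/\delta)}\right] \le \delta
\end{equation*}
for an appropriate constant $c$. Combining this with the in-expectation regret bound replaces $\log n$ by $\log(n/\delta)$ and introduces an absolute constant, so the final per-job regret is at most $9\sqrt{2 d \log(n/\delta)/N^\prime}$ after absorbing the various constants.

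The principal technical obstacle is the clean handling of delayed feedback together with the time-varying learning rate $\eta_t$: one must verify that the weight vector used when job $j$ is processed is a legitimate Hedge iterate over a well-defined subsequence, and that the subsequence decomposition does not lose more than a factor $\sqrt d$. Tracking the constant $9$ precisely requires bookkeeping in both the potential-function bound and the concentration step, but conceptually the proof is routine once the delay is reduced to the undelayed Hedge setting; this is essentially the argument of \cite{Jain14}, which I would cite rather than reproduce in full.
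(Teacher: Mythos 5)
The paper offers no proof of this proposition at all: it is imported verbatim as Theorem~1 of the cited work (``as proved in \cite{Jain14}, we have that\dots''), so the only ``proof'' in the paper is the citation itself. Your sketch --- the Hedge potential argument, the decomposition into $d$ interleaved subsequences to handle delayed feedback, and the Azuma--Hoeffding step to obtain the high-probability form with $\log(n/\delta)$ --- is a faithful reconstruction of the standard analysis underlying that cited theorem, and your concluding decision to cite rather than reproduce it is exactly what the paper does.
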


\end{document}